\newtheorem{problem}{Problem}
\newtheorem{definition}{Definition}
\newtheorem{theorem}{Theorem}
\newtheorem{assumption}{Assumption}
\newtheorem{proposition}{Proposition}
\newtheorem{remark}{Remark}
\title{\LARGE \bf
Robust Decentralized Abstractions for Multiple Mobile Manipulators
}
\author{Christos K. Verginis and Dimos V. Dimarogonas% <-this % stops a space
\thanks{The authors are with the ACCESS Linnaeus Center, School of Electrical
Engineering, KTH Royal Institute of Technology, SE-100 44, Stockholm,
Sweden and with the KTH Center for Autonomous Systems. Email: {\tt\small \{cverginis, dimos\}@kth.se}. This work was supported by the H2020 ERC Starting Grant BUCOPHSYS, the Swedish Research Council (VR), the Knut och Alice Wallenberg Foundation, the European Union's Horizon 2020 Research and Innovation Programme under the Grant Agreement No. 644128 (AEROWORKS) and the EU H2020 Research and Innovation Programme under GA No. 731869 (Co4Robots).}
}
\begin{document}

\maketitle
\thispagestyle{empty}
\pagestyle{empty}

%%%%%%%%%%%%%%%%%%%%%%%%%%%%%%%%%%%%%%%%%%%%%%%%%%%%%%%%%%%%%%%%%%%%%%%%%%%%%%%%
\begin{abstract}
This paper addresses the problem of decentralized abstractions for multiple mobile manipulators with $2$nd order dynamics. In particular, we propose decentralized controllers for the navigation of each agent among predefined regions of interest in the workspace, while guaranteeing at the same time inter-agent collision avoidance and connectivity maintenance for a subset of initially connected agents. In that way, the motion of the coupled multi-agent system is abstracted into multiple finite transition systems for each agent, which are then suitable for the application of temporal logic-based high level plans. The proposed methodology is decentralized, since each agent uses local information based on limited sensing capabilities. Finally, simulation studies verify the validity of the approach.
\end{abstract}

%%%%%%%%%%%%%%%%%%%%%%%%%%%%%%%%%%%%%%%%%%%%%%%%%%%%%%%%%%%%%%%%%%%%%%%%%%%%%%%%
\section{INTRODUCTION} \label{sec:intro}

Multi-agent systems have gained a significant amount of attention in the last decades, due to the several advantages they yield with respect to single-agent setups. A recent direction in the multi-agent control and robotics field is the use of temporal logic languages for motion and/or action planning, since they provide a fully-automated correct-by-design controller synthesis approach for autonomous robots. Temporal logics, such as linear temporal logic (LTL), computation tree logic (CTL) or metric-interval temporal logic (MITL), provide formal high-level languages that can describe planning objectives more complex than the usual navigation techniques. The task specification is given as a temporal logic formula with respect to a discretized abstraction of the robot motion modeled as a finite transition system, and then, a high-level discrete plan is found by off-the-shelf model-checking algorithms, given the finite transition system and the task specification \cite{baier2008principles}. 

There exists a wide variety of works that employ temporal logic languages for multi-agent systems, e.g., \cite{Meng15,Belta2007,Bhatia2010,Bhatia2011,Cowlagi2016,Diaz2015,Fainekos2009,Filippidis2012}. The discretization o\texttt{}f a multi-agent system to an abstracted finite transition system necessitates the design of appropriate continuous-time controllers for the transition of the agents among the states of the transition system \cite{baier2008principles}. Most works in the related literature, however, including the aforementioned ones, either assume that there \textit{exist} such continuous controllers or adopt single- and double-integrator models, ignoring the actual dynamics of the agents. Discretized abstractions, including design of the discrete state space and/or continuous-time controllers, have been considered in \cite{Belta2005,belta2006controlling,reissig2011computing,tiwari2008abstractions,rungger2015state} for general systems and \cite{boskos2015decentralized, belta2004abstraction} for multi-agent systems. Another important issue concerning multi-agent abstractions that has not been addressed in the related literature is the collision avoidance between the robotic agents, which, unlike the unrealistic point-mass assumption that is considered in many works, can be more appropriately approximated by unions of rigid bodies. 

This work addresses the problem of decentralized abstractions for a team of mobile robotic manipulators, represented by a union of $3$D ellipsoids, among predefined regions of interest in the workspace. Mobile manipulators consist of a mobile base and a robotic arm, which makes them suitable for performing actions around a workspace (e.g., transportation of objects). In \cite{tanner2003nonholonomic} the authors consider the navigation of two mobile manipulators grasping an object, based on $3$D ellipsoids, whereas \cite{Loizou-RSS-14} deals with general-shape multi-agent navigation, both based on point-world transformations. Navigation of ellipsoidal agents while incorporating collision-avoidance properties was also studied in \cite{do2012coordination} for single-integrator dynamics, by transforming the ellipsoids to spheres. In our previous work \cite{verginis_ifac17}, we addressed a hybrid control framework for the navigation of mobile manipulators and their interaction with objects in a given workspace, proposing, however, a centralized solution.

In this work, we design robust continuous-time controllers for the navigation of the agents among the regions of interest. The proposed methodology is decentralized, since each agent uses only local information based on limited sensing capabilities. Moreover, we guarantee (i) inter-agent collision avoidance by introducing a novel transformation-free ellipsoid-based strategy, (ii) connectivity maintenance for a subset of the initially connected agents, which might be important for potential cooperative tasks, and (iii) kinematic singularity avoidance of the robotic agents. 

The rest of the paper is organized as follows. Section \ref{sec:prel} provides necessary notation and preliminary background and Section \ref{sec:PF} describes the tackled problem. The main results are given in Section \ref{sec:main results} and Section \ref{sec:simulations} presents simulation results. Finally, \ref{sec:concl} concludes the paper.

%%%%%%%%%%%%%%%%%%%%%%%%%%%%%%%%%%%%%%%%%%%%%%%%%%%%%%%%%%%%%%%%%%%%%%%%%%%%%%%%
\section{PRELIMINARIES} \label{sec:prel}
\subsection{Notation}
The set of positive integers is denoted as $\mathbb{N}$ whereas the real and complex $n$-coordinate spaces, with $n\in\mathbb{N}$, are denoted as $\mathbb{R}^n$ and $\mathbb{C}^n$, respectively; $\mathbb{R}^n_{\geq 0}, \mathbb{R}^n_{> 0}, \mathbb{R}^n_{\leq 0}$ and $\mathbb{R}^n_{< 0}$ are the sets of real $n$-vectors with all elements nonnegative, positive, nonpositive, and negative, respectively. The notation $\|x\|$ is used for the Euclidean norm of a vector $x \in \mathbb{R}^n$. Given a a scalar function $y:\mathbb{R}^{n}\to\mathbb{R}$ and a vector $x\in\mathbb{R}^n$, we use the notation $\nabla_{x}y(x) = [\tfrac{\partial y}{\partial x_1}, \dots, \tfrac{\partial y}{\partial x_n}]^\top\in\mathbb{R}^n$. 
Define by $I_n \in \mathbb{R}^{n \times n}, 0_{m \times n} \in \mathbb{R}^{m \times n}$, the identity matrix and the $m \times n$ matrix with all entries zero, respectively; $\mathcal{B}_{c,r} = \{x \in \mathbb{R}^3: \|x-c\| \leq r\}$ is the $3$D sphere of center $c\in\mathbb{R}^{3}$ and radius $r \in \mathbb{R}_{\ge 0}$. The boundary of a set $A$ is denoted as $\partial A$ and its interior as $\accentset{\circ}{A} = A\backslash\partial A$. The vector connecting the origins of coordinate frames $\{A\}$ and $\{B$\} expressed in frame $\{C\}$ coordinates in $3$D space is denoted as $p^{\scriptscriptstyle C}_{{\scriptscriptstyle B/A}}\in{\mathbb{R}}^{3}$. %Given $a\in\mathbb{R}^3$, $S(a)$ is the skew-symmetric matrix defined according to $S(a)b = a\times b$. % We further denote as $q_{\scriptscriptstyle B/A} = [\phi_{\scriptscriptstyle B/A}, \theta_{\scriptscriptstyle B/A}, \psi_{\scriptscriptstyle B/A}]^\tau \in\mathbb{T}^3$ the Euler angles representing the orientation of frame $\{B\}$ with respect to frame $\{A\}$,  with $-\pi \le \phi_{\scriptscriptstyle A/B}, \psi_{\scriptscriptstyle A/B} \le \pi$ and $\frac{\pi}{2} \le \theta_{\scriptscriptstyle A/B} \le \frac{\pi}{2}$, where $\mathbb{T}^3$ is the $3$D torus; The angular velocity of frame $\{B\}$ with respect to $\{A\}$, expressed in frame $\{C\}$ coordinates, is denoted as $\omega^{\scriptscriptstyle C}_{\scriptscriptstyle B/A}\in \mathbb{R}^{3}$. We also use the notation $\mathbb{M} = \mathbb{R}^3\times \mathbb{T}^3$. 
For notational brevity, when a coordinate frame corresponds to an inertial frame of reference $\{I\}$, we will omit its explicit notation (e.g., $p_{\scriptscriptstyle B} = p^{\scriptscriptstyle I}_{\scriptscriptstyle B/I}, \omega_{\scriptscriptstyle B} = \omega^{\scriptscriptstyle I}_{\scriptscriptstyle B/I}$). All vector and matrix differentiations are derived with respect to an inertial frame $\{I\}$, unless otherwise stated.

\subsection{Cubic Equations and Ellipsoid Collision}

%\begin{proposition} \label{prop:quartic}
%Consider the quartic equation $f(\lambda) = c_4\lambda^4+c_3\lambda^3+c_2\lambda^2 + c_1\lambda + a_0 = 0$ with $c_\ell\in\mathbb{R},\forall \ell\in\{1,\dots,4\}$ and roots $(\lambda_1,\dots,\lambda_4)\in\mathbb{C}^4$. Then, given its discriminant $\Delta = (c_4)^{6}\prod_{\substack{i\in\{1,\dots,4\}\\\substack{j\in\{i+1,\dots,4\} } }}(\lambda_i-\lambda_j)^2$, the following hold: %{\substack{\overline{q}\in\mathbb{R}^{\mathfrak{n}}\\p_s\in\mathcal{S}_{\overline{q}}}}
%\begin{enumerate} [(i)]
%\item $\Delta = 0 \Leftrightarrow \exists i,j\in\{1,\dots,4\}$, with $i\neq j$, such that $\lambda_i=\lambda_j$, i.e., at least two %roots are equal, 
%\item $\Delta > 0 \Leftrightarrow \lambda_i\in\mathbb{R},\forall i\in\{1,\dots,4\}$ or $\lambda_i\in\mathbb{C}\backslash\mathbb{R},\forall i\in\{1,\dots,4\}$, i.e., all the roots are either strictly positive or strictly negative.
%\end{enumerate}

\begin{proposition} \label{prop:cubic}
Consider the cubic equation $f(\lambda) = c_3\lambda^3+c_2\lambda^2+c_1\lambda + c_0 = 0$ with $c_\ell\in\mathbb{R},\forall \ell\in\{0,\dots,3\}$ and roots $(\lambda_1,\lambda_2,\lambda_3)\in\mathbb{C}^3$, with $f(\lambda_1)=f(\lambda_2)=f(\lambda_3)=0$. Then, given its discriminant $\Delta = (c_3)^4\prod_{\substack{i\in\{1,2\}\\\substack{j\in\{i+1,\dots,3\} } }}(\lambda_i-\lambda_j)^2$, the following hold: 
\begin{enumerate} [(i)]
\item $\Delta = 0 \Leftrightarrow \exists i,j\in\{1,2,3\}$, with $i\neq j$, such that $\lambda_i=\lambda_j$, i.e., at least two roots are equal, 
\item $\Delta > 0 \Leftrightarrow \lambda_i\in\mathbb{R},\forall i\in\{1,2,3\}$, and $\lambda_i\neq\lambda_j, \forall i,j\in\{1,2,3\}$, with $i\neq j$, i.e., all roots are real and distinct.
%\item $\Delta < 0 \Leftrightarrow \lambda_i\in\mathbb{R}$, for some $i\in\{1,2,3\}$, and $\lambda_j\in\mathbb{C}, \forall j\in\{1,2,3\}\backslash\{i\}$.
\end{enumerate}
\end{proposition}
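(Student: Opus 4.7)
The plan is to exploit the fact that $\Delta$ is already displayed in factored form as a product of squared root differences times $c_3^4$, so each equivalence reduces to a sign analysis once we account for the constraint that the non-real roots of a real polynomial must come in complex conjugate pairs.

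For part (i), I would first note that since the equation is genuinely cubic we have $c_3\neq 0$, so $c_3^4>0$. Therefore $\Delta=0$ if and only if at least one of the three factors $(\lambda_1-\lambda_2)^2$, $(\lambda_1-\lambda_3)^2$, $(\lambda_2-\lambda_3)^2$ vanishes, which in turn is equivalent to the existence of indices $i\neq j$ with $\lambda_i=\lambda_j$. Both implications are then immediate and no further work is required.

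For part (ii), I would split into the only two root configurations compatible with real coefficients: (a) all three roots in $\mathbb{R}$, or (b) one real root $\lambda_1\in\mathbb{R}$ together with a complex conjugate pair $\lambda_2=a+ib$, $\lambda_3=a-ib$, where $a\in\mathbb{R}$ and $b\in\mathbb{R}\setminus\{0\}$. In case (a), distinctness of the roots makes every squared difference strictly positive, which combined with $c_3^4>0$ gives $\Delta>0$; this covers the ``$\Leftarrow$'' direction. For the ``$\Rightarrow$'' direction, assume $\Delta>0$. By part (i) the roots are pairwise distinct, so only case (b) remains to be ruled out. Here a direct computation gives $(\lambda_2-\lambda_3)^2=(2ib)^2=-4b^2<0$, while
\[
(\lambda_1-\lambda_2)^2(\lambda_1-\lambda_3)^2=\bigl[(\lambda_1-\lambda_2)(\lambda_1-\overline{\lambda_2})\bigr]^2=|\lambda_1-\lambda_2|^4\geq 0,
\]
so that $\Delta=c_3^4\cdot|\lambda_1-\lambda_2|^4\cdot(-4b^2)\leq 0$, contradicting $\Delta>0$. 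Case (b) is excluded, case (a) holds, and the roots are real and distinct.

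I do not expect any real obstacle: the only subtlety is keeping track of signs in case (b) and recognizing that the mixed product $(\lambda_1-\lambda_2)(\lambda_1-\lambda_3)$ collapses to a modulus squared because $\lambda_3=\overline{\lambda_2}$. The rest is a bookkeeping exercise in the factored expression for $\Delta$.
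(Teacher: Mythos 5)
Your proof is correct. The paper states Proposition~\ref{prop:cubic} as a standard preliminary fact and gives no proof of its own, so there is nothing to compare against; your argument is the standard self-contained verification. Part (i) follows immediately from $c_3\neq 0$ and the factored form of $\Delta$, and your case split for part (ii) --- all roots real versus one real root plus a conjugate pair, which exhausts the possibilities because the coefficients are real --- is handled correctly: the key observations that $(\lambda_2-\lambda_3)^2=-4b^2<0$ and that $(\lambda_1-\lambda_2)(\lambda_1-\overline{\lambda_2})=|\lambda_1-\lambda_2|^2$ give $\Delta<0$ in the conjugate-pair case (strictly negative, in fact, since a real $\lambda_1$ cannot equal a non-real $\lambda_2$, so one could even strengthen your ``$\leq 0$''), which rules it out under the hypothesis $\Delta>0$. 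The only cosmetic addition one might make is to remark that $\Delta$, being symmetric in the roots, is a real number in the first place so that the inequality $\Delta>0$ is meaningful, but this is also immediate from your case analysis.
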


%\begin{proposition} [cite] \label{prop:ellipsoids}
%Consider two ellipsoids $\mathcal{A} = \{z\in\mathbb{R}^4 \text{ s.t. } z^\top A(t) z \leq 0\}$, $\mathcal{B} = \{z\in\mathbb{R}^4 \text{ s.t. } z^\top B(t) z \leq 0  \}$, with $z=[p^\top 1]^T$ being the homogeneous coordinates of $p\in\mathbb{R}^3$, and $A, B:\mathbb{R}_{\geq 0}\to\mathbb{R}^{4\times4}$ terms that described their motion in $3$D space.
%Given their characteristic polynomial $f:\mathbb{R}\to\mathbb{R}$ with $f(\lambda) = \det(\lambda A - B)$, which has degree $4$, the following hold:
%\begin{enumerate}[(i)]
%\item The polynomial $f(\lambda)$ has always two positive real roots,
%\item $\mathcal{A}$ and $\mathcal{B}$ are separate if and only if the characteristic equation $f(\lambda) = 0$ has two distinct negative roots.
%\item $\mathcal{A}$ and $\mathcal{B}$ touch externally if and only if the characteristic equation $f(\lambda) = 0$ has a negative double root.
%\end{enumerate}
%\end{proposition}

\begin{proposition} \cite{choi2009continuous} \label{prop:ellipsoids}
Consider two planar ellipsoids $\mathcal{A} = \{z\in\mathbb{R}^3 \text{ s.t. } z^\top A(t) z \leq 0\}$, $\mathcal{B} = \{z\in\mathbb{R}^3 \text{ s.t. } z^\top B(t) z \leq 0  \}$, with $z=[p^\top 1]^\top$ being the homogeneous coordinates of $p\in\mathbb{R}^2$, and $A, B:\mathbb{R}_{\geq 0}\to\mathbb{R}^{3\times3}$ terms that describe their motion in $2$D space.
Given their characteristic polynomial $f:\mathbb{R}\to\mathbb{R}$ with $f(\lambda) = \det(\lambda A - B)$, which has degree $3$, the following hold:
\begin{enumerate}[(i)]
\item  $\exists \lambda^*\in\mathbb{R}_{>0} \text{ s.t. } f(\lambda^*)=0$, i.e,
the polynomial $f(\lambda)$ always has one positive real root,
\item $\mathcal{A}\cap\mathcal{B} = \emptyset \Leftrightarrow \exists \lambda^*_1, \lambda^*_2 \in\mathbb{R}_{<0}$, with $\lambda^*_1\neq\lambda^*_2$, and $f(\lambda^*_1)=f(\lambda^*_2)=0$, i.e.,
$\mathcal{A}$ and $\mathcal{B}$ are disjoint if and only if the characteristic equation $f(\lambda) = 0$ has two distinct negative roots.
\item $\mathcal{A}\cap\mathcal{B} \neq \emptyset$ and $\accentset{\circ}{\mathcal{A}}\cap\accentset{\circ}{\mathcal{B}}=\emptyset \Leftrightarrow \exists \lambda^*_1, \lambda^*_2 \in\mathbb{R}_{<0}$, with $\lambda^*_1=\lambda^*_2$, and $f(\lambda^*_1)=f(\lambda^*_2)=0$, i.e.,
$\mathcal{A}$ and $\mathcal{B}$ touch externally if and only if the characteristic equation $f(\lambda) = 0$ has a negative double root.
\end{enumerate}
\end{proposition}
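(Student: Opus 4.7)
Since this result is cited verbatim from \cite{choi2009continuous}, I outline a proof strategy rather than grinding through the full argument.

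Part (i) is short. Because each of $\mathcal{A},\mathcal{B}$ is a bounded, non-empty ellipse in the affine plane $\{z_3=1\}$, a Schur-complement computation on the $3\times 3$ homogenising matrix shows that $A$ and $B$ must have signature $(2,1)$; in particular, $\det(A)<0$ and $\det(B)<0$. Then $f(0)=\det(-B)=-\det(B)>0$, while the leading coefficient of $f(\lambda)=\det(\lambda A - B)$ in $\lambda$ equals $\det(A)<0$, so $f(\lambda)\to -\infty$ as $\lambda\to +\infty$. The intermediate value theorem then yields a positive real root, proving (i).

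For parts (ii) and (iii), the plan is to exploit the geometry of the conic pencil $\{\lambda A - B\}_{\lambda\in\mathbb{R}}$. Roots of $f$ are exactly those $\lambda$ for which the pencil member is rank-deficient, i.e., a \emph{degenerate} conic --- a pair of lines, a single line, or a point. I would first simultaneously reduce $(A,B)$ to canonical form via a congruence $(P^\top A P,\,P^\top B P)$, which is possible thanks to the signature of $A$, and read the three roots off the resulting diagonal structure. The forward direction of (ii) then proceeds as follows: two disjoint ellipses admit a strict separating line, and any such separator can be thickened into a pair of nearby separating lines whose product realises a degenerate conic $\lambda^* A - B$ in the pencil; the signature constraint forces $\lambda^*<0$, and choosing two distinct separators (which exist because the ellipses are compact with positive gap) supplies two distinct negative roots. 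The reverse direction reconstructs the separating lines from the two degenerate members and shows that neither ellipse can cross them. Part (iii) is the boundary case: when $\mathcal{A}$ and $\mathcal{B}$ touch externally, the separating lines collapse onto the unique common tangent, forcing the two negative roots of $f$ to coalesce into a double root --- which by Proposition~\ref{prop:cubic}(i) is precisely the condition $\Delta=0$, with the third root still positive by (i).

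The main technical obstacle is the sign-tracking step: explaining why separating degenerate conics correspond to \emph{negative} values of $\lambda$ rather than positive, and why internal incidences (one ellipse inside the other, transverse intersection, internal tangency) force the two non-positive roots to be complex conjugate, or positive and repeated. This requires a careful case split on the inertia of $\lambda A - B$ as $\lambda$ traverses $\mathbb{R}$, tracking where the rank drops occur and whether the factors at the drop are real or complex lines. This classification is the heart of the argument in \cite{choi2009continuous} and is what I would expect to spend most effort on if reproving the proposition from scratch.
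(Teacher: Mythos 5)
The paper offers no proof of this proposition to compare against: it is imported verbatim from \cite{choi2009continuous} as a cited tool, so your attempt can only be judged on its own merits. On those terms, your proof of part (i) is complete and correct: the signature $(2,1)$ of a homogenised ellipse matrix gives $\det(A)<0$ and $\det(B)<0$, hence $f(0)=\det(-B)=-\det(B)>0$ while the leading coefficient $\det(A)$ is negative, and the intermediate value theorem delivers a positive root. This is exactly the standard argument.

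For parts (ii) and (iii), however, what you have written is a plan with a genuine gap, and one intermediate step is wrong as stated. You say that a separating line ``can be thickened into a pair of nearby separating lines whose product realises a degenerate conic $\lambda^* A - B$ in the pencil.'' This cannot work: the pencil $\{\lambda A - B\}$ is a one-parameter family whose degenerate members occur only at the (at most three) roots of $f$, so you have no freedom to choose the pair of lines --- you must instead prove that the pencil \emph{already contains} a degenerate member that is a pair of real lines separating the two ellipses, and that the corresponding $\lambda$ is negative. That existence-and-sign statement is precisely the inertia-tracking classification you defer to the end (``the heart of the argument\dots what I would expect to spend most effort on''), together with the companion claim that transverse intersection, containment, or internal tangency force the two non-positive roots to be complex or to lie elsewhere. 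Without that case analysis, neither direction of (ii) nor the degenerate limit in (iii) is established; the usual route (as in the cited literature) is to normalise $\mathcal{A}$ to the unit circle by an affine map, which rescales $f$ by a positive constant, and then carry out the root classification explicitly for a circle versus a general ellipse. As it stands, your submission proves (i) and correctly identifies, but does not close, the essential step for (ii) and (iii).
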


%\begin{proposition} \label{prop:ell2d to ell3d}
%Considert two ellipsoids $\mathcal{A} = \{z\in\mathbb{R}^4 \text{ s.t. } z^\top A(t) z \leq 0\}$, $\mathcal{B} = \{z\in\mathbb{R}^3 %\text{ s.t. } z^\top B(t) z \leq 0  \}$, with $z=[p^\top 1]^\top$ being the homogeneous coordinates of $p\in\mathbb{R}^3$, and $A, %B:\mathbb{R}_{\geq 0}\to\mathbb{R}^{4\times4}$ terms that describe their motion in $3$D space. Consider also the projections of %$\mathcal{A},\mathcal{B}$ to the $x$-$y$,$x$-$z$,$y$-$z$ planes: %$\mathcal{A}^{xy},\mathcal{A}^{xz},\mathcal{A}^{yz},\mathcal{B}^{xy},\mathcal{B}^{xz},\mathcal{B}^{yz}$ with corresponding matrices %$A^{xy},A^{xz},A^{yz},B^{xy},B^{xz},B^{yz}$. Then 
%\begin{equation}
%$\mathcal{A}\cap\mathcal{B}$ and $\accentset{\circ}{\mathcal{A}}\cap\accentset{\circ}{\mathcal{B}}=\emptyset \Leftrightarrow $
%\end{equation}
%the discriminants $\Delta^{xy},\Delta^{xz},\Delta^{yz}$ of the characteristic equations $\det(\lambda A^{xy}- b^{xy})=0, \det(\lambda A^{xz}- b^{xz})=0, \det(\lambda A^{yz}- b^{yz})=0$, 
%\end{proposition}

\section{PROBLEM FORMULATION} \label{sec:PF}

\begin{figure}[]
\vspace{0.4cm}
\centering
\includegraphics[trim = 0 0 0 0,scale=0.25]{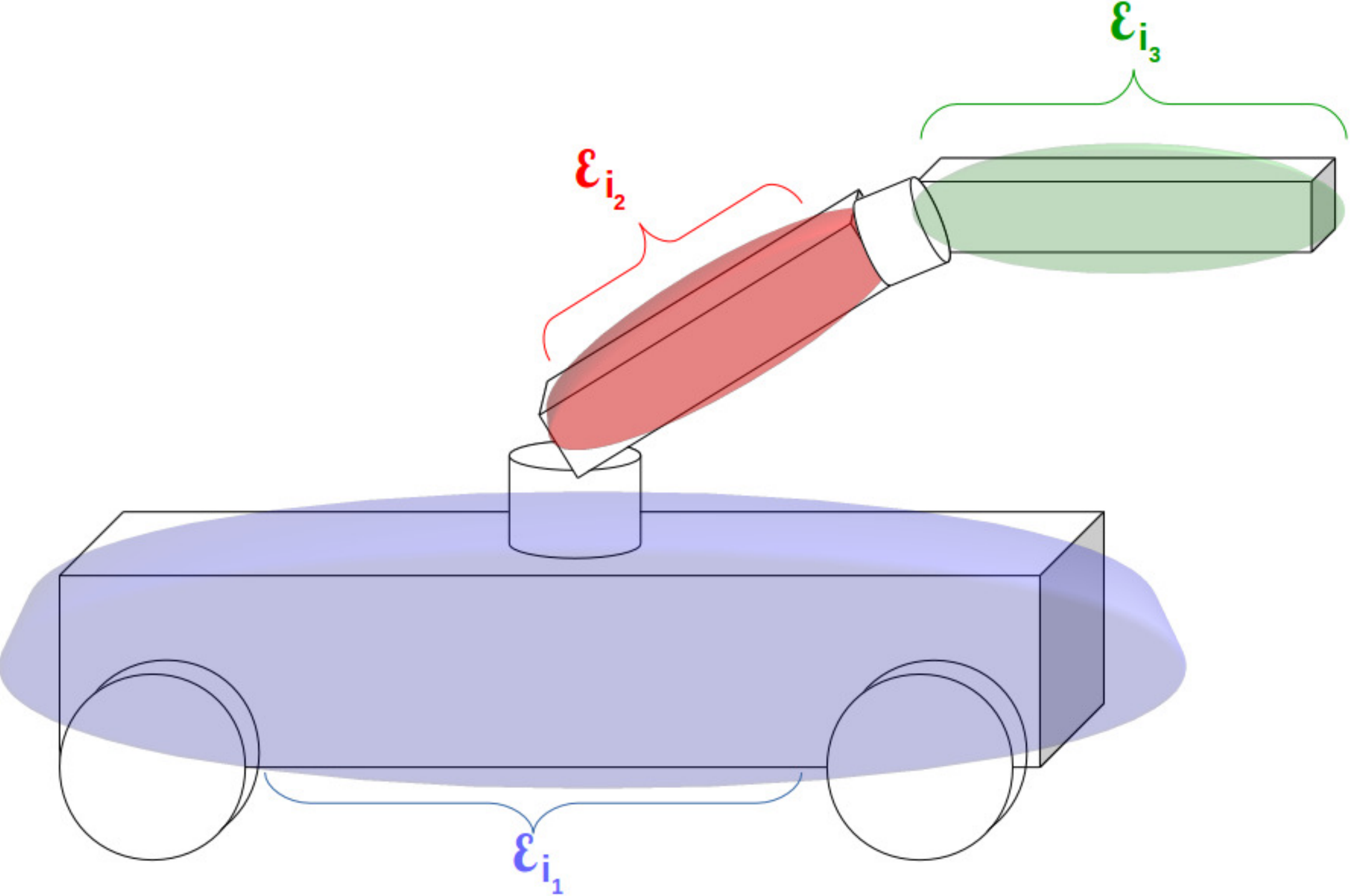}
\caption{An agent that consists of $\ell_i = 3$ rigid links.\label{fig:Agent}}
\end{figure}

Consider $N\in\mathbb{N}$ fully actuated agents with $\mathcal{V}\coloneqq \{1,\dots,N\}, N\geq 2$, composed by a robotic arm mounted on an
 omnidirectional mobile base, operating in a static workspace $\mathcal{W}$ that is bounded by a large sphere in $3$D space, i.e. $\mathcal{W} = \mathring{\mathcal{B}}_{p_0,r_0} = \{p\in\mathbb{R}^3 \text{ s.t. } \lVert p - p_0 \rVert < r_0\}$, where $p_0\in\mathbb{R}^3$ is the center of $\mathcal{W}$, and $r_0\in\mathbb{R}_{\geq 0}$ is its radius. Without loss of generality, we consider that $p_0 = 0_{3\times 1}$, corresponding to an inertial frame $\{I\}$.  Within $\mathcal{W}$ there exist $K$ disjoint spheres around points of interest, which are described by $\pi_k = \mathcal{B}_{p_k,r_k} = \{p\in\mathbb{R}^3 \text{ s.t. } \lVert p - p_k \rVert \leq r_k\}, k\in\mathcal{K}\coloneqq\{1,\dots,K\}$, where $p_k\in\mathbb{R}^3$ and $r_k\in\mathbb{R}_{>0}$ are the center and radius of the $k$th region, respectively. The regions of interest can be equivalently described by $\pi_k = \{z\in\mathbb{R}^4 \text{ s.t. } z^\top T_{\pi_k}z \leq 0 \}$, where $z=[p^\top, 1]^\top$ is the vector of homogeneous coordinates of $p\in\mathbb{R}^3$, and
\begin{equation} \label{eq:region matrix}
T_{\pi_k} = \begin{bmatrix}
I_3 & p_{k} \\ 0^\top_{3\times 1} & -r^2_k
\end{bmatrix}, \forall k\in\mathcal{K}.
\end{equation}
The dynamic model of each agent is given by the second-order Lagrangian dynamics:
\begin{equation}
M_i(q_i)\ddot{q}_i + N_i(q_i,\dot{q}_i)\dot{q}_i + g_i(q_i) + f_i(q_i, \dot{q}_i) = \tau_i, \label{eq:dynamics}
\end{equation}
$\forall i\in\mathcal{V}$, where $q_i\in\mathbb{R}^{n_i}$ is the vector of generalized  coordinates (e.g., pose of mobile base and joint coordinates of the arms), $M_i:\mathbb{R}^{n_i}\to\mathbb{R}^{n_i\times n_i}$ is the positive definite inertia matrix,  $N_i:\mathbb{R}^{n_i}\times\mathbb{R}^{n_i}\to\mathbb{R}^{n_i\times n_i}$ is the Coriolis matrix, $g_i:\mathbb{R}^{n_i}\to\mathbb{R}^{n_i}$ is the gravity vector, $f_i:\mathbb{R}^{n_i}\times\mathbb{R}^{n_i}\to\mathbb{R}^{n_i}$ is a term representing friction and modeling uncertainties and $\tau_i\in\mathbb{R}^{n_i}$ is the vector of joint torques, representing the control inputs. Without loss of generality, we assume that $n_i = n\in\mathbb{N},\forall i\in\mathcal{V}$. In addition, we denote as $\{B_i\}$ the frame of the mobile base of agent $i$ and $p_{\scriptscriptstyle B_i}:\mathbb{R}^{n}\to\mathbb{R}^3$ its inertial position. %In fact, we consider that $p_{\scriptscriptstyle B_i}$ is part of the actual generalized variables $q_i, \forall i\in\mathcal{V}$.
Moreover, the matrix $\dot{M}_i - 2N_i$ is skew-symmetric \cite{siciliano2008springer}, and we further make the following assumption:
\begin{assumption} \label{ass:f}
There exist positive constants $c_i$ such that $\lVert f_i(q_i,\dot{q}_i)\rVert\leq c_i\lVert q_i \rVert \lVert \dot{q}_i \rVert, \forall (q_i,\dot{q}_i)\in\mathbb{R}^{n}\times\mathbb{R}^{n}, i\in\mathcal{V}$.
%The term $f_i(q_i,\dot{q}_i)$ satisfies $f(q_i,0_{n\times1}) = 0_{n\times1},\forall q_i\in\mathbb{R}^{n}$, and . 
\end{assumption}

We consider that each agent is composed by $\ell_i$ rigid links (see Fig. \ref{fig:Agent}) with $\mathcal{Q}_i = \{1,\dots,\ell_i\}$ the corresponding index set. Each link of agent $i$ is approximated by the ellipsoid set \cite{choi2009continuous} $\mathcal{E}_{i_m}(q_i) = \{z\in\mathbb{R}^4 \text{ s.t. } z^\top E_{i_m}(q_i)z \leq 0\}$; $z=[p^\top, 1]^\top$ is the homogeneous coordinates of $p\in\mathbb{R}^3$, and $E_{i_m}:\mathbb{R}^n\to\mathbb{R}^{4\times4}$ is defined as $E_{i_m}(q_i) = T^{-T}_{i_m}(q_i)\hat{E}_{i_m}T^{-1}_{i_m}(q_i)$, where $\hat{E}_{i_m} = \text{diag}\{a^{-2}_{i_m},b^{-2}_{i_m},c^{-2}_{i_m},-1\}$ corresponds to the positive lengths $a_{i_m},b_{i_m},c_{i_m}$ of the principal axes of the ellipsoid, and $T_{i_m}:\mathbb{R}^n\to\mathbb{R}^{4\times4}$ is the transformation matrix for the coordinate frame $\{i_m\}$ placed at the center of mass of the $m$-th link of agent $i$, aligned with the principal axes of $\mathcal{E}_{i_m}$:
\begin{equation}
T_{i_m}(q_i) = \begin{bmatrix}
R_{i_m}(q_i) & p_{i_m}(q_i) \\ 0^\top_{3\times 1} & 1
\end{bmatrix}, \notag
\end{equation}
%In view of Proposition \ref{prop:ellipsoids}, when two different ellipsoids $\mathcal{E}_{i_m},\mathcal{E}_{j_l}$ do not collide, their characteristic equation $f(\lambda) = 0$ has four (two positive and two distinct negative) real roots. Hence, Proposition \ref{prop:quartic} implies that the discriminant $\Delta_{i_m,k_l}$ of $f(\lambda) = 0$ is positive when $\mathcal{E}_{i_m},\mathcal{E}_{j_l}$ do not collide. Moreover, since the discriminant of a quartic equation is zero if and only if there is at least a double root, Propositions \ref{prop:quartic} and \ref{prop:ellipsoids} suggest that $\Delta_{i_m,k_l}=0$ if and only if $\mathcal{E}_{i_m},\mathcal{E}_{j_l}$ collide with each other. Therefore, we conclude that a sufficient condition for the ellipsoids $\mathcal{E}_{i_m}$ and $\mathcal{E}_{j_l}$ not to collide is $\Delta_{i_m,k_l} > 0$. In fact, it can be verified that $\Delta_{i_m,k_l}$ increases with respect to the distance of $\mathcal{E}_{i_m}$, $\mathcal{E}_{j_l}$, and thus, it can used as an approximation of the distance of $\mathcal{E}_{i_m}$ and $\mathcal{E}_{j_l}, i,j\in\mathcal{V}, m\in\mathcal{L}_i, l\in\mathcal{L}_j$.
with $R_{i_m}:\mathbb{R}^n\to\mathbb{R}^{3\times3}$ being the rotation matrix of the center of mass of the link, $\forall m\in\mathcal{Q}_i,i\in\mathcal{V}$.
For an ellipsoid $\mathcal{E}_{i_m}, i\in\mathcal{V},m\in\mathcal{Q}_i$, we denote as $\mathcal{E}^{xy}_{i_m},\mathcal{E}^{xz}_{i_m}, \mathcal{E}^{yz}_{i_m}$ its projections on the planes $x$-$y$, $x$-$z$ and $y$-$z$, respectively, with corresponding matrix terms $E^{xy}_{i_m},E^{xz}_{i_m}, E^{yz}_{i_m}$. Note that the following holds for two different ellipsoids $\mathcal{E}_{i_m}$ and $\mathcal{E}_{j_l}$:
\begin{align}
\mathcal{E}_{i_m}(q_i)\cap\mathcal{E}_{j_l}(q_j) \neq \emptyset \ &\land \
\accentset{\circ}{\mathcal{E}}_{i_m}(q_i)\cap\accentset{\circ}{\mathcal{E}}_{j_l}(q_j) = \emptyset
 \Leftrightarrow \notag \\
\mathcal{E}^s_{i_m}(q_i)\cap\mathcal{E}^s_{j_l}(q_j) \neq \emptyset \ &\land \ \accentset{\circ}{\mathcal{E}}^s_{i_m}(q_i)\cap\accentset{\circ}{\mathcal{E}}^s_{j_l}(q_j) = \emptyset, \notag
\end{align}
$\forall s\in\{xy,xz,yz\}$, i.e., in order for $\mathcal{E}_{i_m}, \mathcal{E}_{j_l}$ to collide (touch externally), all their projections on the three planes must also collide. Therefore, a sufficient condition for $\mathcal{E}_{i_m}$ and $\mathcal{E}_{j_l}$ not to collide is $\mathcal{E}^s_{i_m}(q_i)\cap\mathcal{E}^s_{j_l}(q_j)=\emptyset$, for some  $s\in\{xy,xz,yz\}$. In view of Proposition \ref{prop:ellipsoids}, that means that the characteristic equations $f^{s}_{i_m,j_l}(\lambda) \coloneqq \det(\lambda\mathcal{E}^s_{i_m}(q_i) - \mathcal{E}^s_{j_l}(q_j))=0$ must always have one positive real root and two negative distinct roots for at least one $s\in\{xy,xz,yz\}$. Hence, be denoting the discriminant of $f^{s}_{i_m,j_l}(\lambda) = 0$ as $\Delta^s_{i_m,j_l}$, Proposition \ref{prop:cubic} suggests that $\Delta^s_{i_m,j_l}$ must remain always positive for at least one $s\in\{xy,xz,yz\}$, since a collision would imply $\Delta^s_{i_m,j_l}=0$, $\forall s\in\{xy,xz,yz\}$. Therefore, by defining the function $\delta:\mathbb{R}\to\mathbb{R}_{\geq 0}$ as: 
\begin{equation} \label{eq:delta}
\delta(x) = \begin{cases}
\phi_\delta(x), & x > 0, \\
0, & x \leq 0,
\end{cases}
\end{equation}
where $\phi_\delta$ is an appropriate polynomial that ensures that $\delta(x)$ is twice continuously differentiable everywhere (e.g. $\phi_\delta(x)=x^3$), we can conclude that a sufficient condition for $\mathcal{E}_{i_m}$ and $\mathcal{E}_{j_l}$ not to collide is $\delta(\Delta^{xy}_{i_m,j_l}) + \delta(\Delta^{xz}_{i_m,j_l}) + \delta(\Delta^{yz}_{i_m,j_l}) > 0$, since a collision would result in $\Delta^s_{i_m,j_l} = 0 \Leftrightarrow \delta(\Delta^s_{i_m,j_l}) = 0, \forall s\in\{xy,xz,yz\}$. 

Next, we define the constant $\bar{d}_{\scriptscriptstyle B_i}$, which is the maximum distance of the base to a point in the agent's volume over all possible configurations, i.e. $\bar{d}_{\scriptscriptstyle B_i} = \sup_{q_i\in\mathbb{R}^n}\{ \lVert p_{\scriptscriptstyle B_i}(q_i) - p_i(q_i) \rVert \}, p_i\in\bigcup_{m\in\mathcal{Q}_i} \mathcal{E}_{i_m}(q_i)$. We also denote $\bar{d}_{\scriptscriptstyle B} = [\bar{d}_{\scriptscriptstyle B_1},\dots,\bar{d}_{\scriptscriptstyle B_N}]^\top\in\mathbb{R}_{\geq 0}^N$. %We define now the sets $Q_{w_i} = \{q_i\in\mathbb{R}^{n} \text{ s.t. } \lVert p_{\scriptscriptstyle B_i} \rVert + \bar{d}_{\scriptscriptstyle B_i} \leq r_0\}$ and $Q_{w} = \cup_{i\in\mathcal{V}}\{Q_{w_i}\}$, that correspond to the coordinate spaces $q_i$ that result in the agents being confined in $\mathcal{W}$. 

Moreover, we consider that each agent has a sensor located at the center of its mobile base $p_{\scriptscriptstyle B_i}$ with a sensing radius $d_{\text{con}_i} \geq 2\max_{i\in\mathcal{V}}\{ \bar{d}_{\scriptscriptstyle B_i} \} + \varepsilon_d$, where $\varepsilon_d$ is an arbitrarily small positive constant. Hence, each agent has the sensing sphere $\mathcal{D}_i(q_i)=\{p\in\mathbb{R}^3 \text{ s.t. } \lVert p - p_{\scriptscriptstyle B_i}(q_i) \rVert \leq d_{\text{con}_i}  \}$ and its neighborhood set at each time instant is defined as $\mathcal{N}_i(q_i) = \{j\in\mathcal{V}\backslash\{i\} \text{ s.t. } \lVert p_{\scriptscriptstyle B_i}(q_i) - p_{\scriptscriptstyle B_j}(q_j)\rVert \leq d_{\text{con}_i} \}$. 

%The fact that the agents' neighborhoods are defined by the positions of their mobile bases $p_{\scriptscriptstyle B_i}$ is motivated by the communication that might be needed between the agents, which is done through local computer systems that are usually mounted on their mobile bases.

As mentioned in Section \ref{sec:intro}, we are interested in defining transition systems for the motion of the agents in the workspace in order to be able to assign complex high level goals through logic formulas. Moreover, since many applications necessitate the cooperation of the agents in order to execute some task (e.g. transport an object), we consider that a nonempty subset $\widetilde{\mathcal{N}}_i \subseteq \mathcal{N}_i(q_i(0)), i\in\mathcal{V}$, of the initial neighbors of the agents must stay connected through their motion in the workspace. In addition, it follows that the transition system of each agent must contain information regarding the current position of its neighbors. The problem in hand is equivalent to designing decentralized control laws $\tau_i,i\in\mathcal{V}$, for the appropriate transitions of the agents among the predefined regions of interest in the workspace.

Next, we provide the following necessary definitions.
\begin{definition} \label{def:in region}
An agent $i\in\mathcal{V}$ is in region $k \in\mathcal{K}$ at a configuration $q_i\in\mathbb{R}^n$, denoted as $\mathcal{A}_i(q_i)\in\pi_k$, if and only if $\lVert p_{i_m}(q_i) - p_k \rVert \leq r_k - \max\{\alpha_{i_m},\beta_{i_m},c_{i_m}\},\forall m\in\mathcal{Q}_i \Rightarrow \lVert p_{\scriptscriptstyle B_i}(q_i) - p_k \rVert \leq r_k	 - \bar{d}_{\scriptscriptstyle B_i}$.
\end{definition}

\begin{definition} 
Agents $i,j\in\mathcal{V}$, with $i\neq j$, are in \textit{collision-free} configurations $q_i,q_j\in\mathbb{R}^n$, denoted as $\mathcal{A}_i(q_i)\not \equiv\mathcal{A}_j(q_j)$, if and only if $\mathcal{E}_{i_m}(q_i)\cap\mathcal{E}_{j_l}(q_j)=\emptyset, \forall m\in\mathcal{Q}_i,l\in\mathcal{Q}_j$.
\end{definition}

Given the aforementioned discussion, we make the following assumptions regarding the agents and the validity of the workspace:
\begin{assumption} \label{ass:validity}
%Consider the function $\bar{r}:\mathbb{N}\times\mathbb{R}^N\to\mathbb{R}_{\geq 0}$ that computes the minimum radius of a region such that it can contain all the agents at the same time, i.e., $\bar{r}(N,\bar{d}_{\scriptscriptstyle B}) = \min_{r\in\mathbb{R}_{>0}}\{ r\in\mathbb{R}_{>0} \text{ s.t. } \mathcal{A}_i(q_i)\in\pi, \mathcal{A}_i(q_i) \not \equiv\mathcal{A}_j(q_j), \forall i,j\in\mathcal{V}, i\neq j\}$ where $\pi \coloneqq \{p\in\mathbb{R}^3 \text{ s.t. } \lVert p \rVert \leq r \}$. Consider also the function $\widetilde{r}:\mathbb{N}\times\mathbb{R}^N\to\mathbb{R}_{\geq 0}$, where 
%\begin{equation}
%\widetilde{r}(N,\bar{d}_{\scriptscriptstyle B}) = \max\{\bar{r}(N,\bar{d}_{\scriptscriptstyle B}),  1 + \max\limits_{i\in\mathcal{V}}\{\bar{d}_{\scriptscriptstyle B_i} \} \}. \label{eq:r_tilde}
%\end{equation} 
%where $\varepsilon_r$ is an arbitrarily small positive constant.
The regions of interest are
\begin{enumerate}[(i)]
\item  large enough such that all the robots can fit, i.e., given a specific $k\in\mathcal{K}$, there exist $q_i, i\in\mathcal{V}$, such that 
$\mathcal{A}_i(q_i)\in\pi_k$, $\forall i\in\mathcal{V}$, with $\mathcal{A}_i(q_i)\not\equiv\mathcal{A}_j(q_j)$, $\forall i,j\in\mathcal{V}$, with $i\neq j$. 
\item sufficiently far from each other and the obstacle workspace, i.e., 
\begin{align*}
& \lVert p_k - p_{k'} \rVert \geq \max\limits_{i\in\mathcal{V}}\{ 2\bar{d}_{\scriptscriptstyle B_i}\} + r_{k} + r_{k'} + \varepsilon_{p}, \notag \\
& r_0 - \| p_k \| \geq \max\limits_{i\in\mathcal{V}}\{ 2\bar{d}_{\scriptscriptstyle B_i}\},  
\end{align*}
$\forall k,k'\in\mathcal{K}, k\neq k'$, where $\varepsilon_p$ is an arbitrarily small positive constant.
\end{enumerate}
\end{assumption} 

Next, in order to proceed, we need the following definition.

\begin{definition}\label{def:agent transition}
Assume that $\mathcal{A}_i(q_i(t_0))\in\pi_k, i\in\mathcal{V}$, for some $t_0\in\mathbb{R}_{\geq 0},k\in\mathcal{K}$, with $\mathcal{A}_i(q_i(t_0))\not \equiv\mathcal{A}_j(q_j(t_0)), \forall j\in\mathcal{V}\backslash\{i\}$. There exists a transition for agent $i$ between $\pi_k$ and $\pi_{k'}, k'\in\mathcal{K}$, denoted as $(\pi_k,t_0)\xrightarrow{i}(\pi_{k'},t_f)$, if and only if there exists a finite time $t_f\geq t_0$, such that $\mathcal{A}_i(q_i(t_f))\in\pi_{k'}$ and $\mathcal{A}_i(q_i(t))\not \equiv\mathcal{A}_j(q_j(t))$, $\mathcal{E}_{i_m}(q_i(t))\cap\mathcal{E}_{i_\ell}(q_i(t))$,
 $\mathcal{E}_{i_m}(q_i(t))\cap\pi_{z} = \emptyset,\forall m,\ell\in\mathcal{Q}_i, m\neq \ell, j\in\mathcal{V}\backslash\{i\},z\in\mathcal{K}\backslash\{k,k'\}, t\in[t_0,t_f]$.
\end{definition}

Given the aforementioned definitions, the treated problem is the design of decentralized control laws for the transitions of the agents between two regions of interest in the workspace, while preventing collisions of the agents with each other, the workspace boundary, and the remaining regions of interest. More specifically, we aim to design a finite transition system for each agent of the form \cite{baier2008principles}
\begin{equation}
\mathcal{T}_i = (\Pi, \Pi_{i,0}, \xrightarrow{i}, \mathcal{AP}_i, \mathcal{L}_i, \mathcal{F}_i), \label{eq:TS}
\end{equation}
where $\Pi = \{\pi_1,\dots,\pi_K\}$ is the set of regions of interest that the agents can be at, according to Def. \ref{def:in region}, $\Pi_{i,0}\subseteq \Pi$ is a set of initial regions that each agent can start from, $\xrightarrow{i}\subset(\Pi\times\mathbb{R}_{\geq 0})^2$ is the transition relation of Def. \ref{def:agent transition}, $\mathcal{AP}_i$ is a set of given atomic propositions, represented as boolean variables, that hold in the regions of interest, $\mathcal{L}_i:\Pi\to2^{\mathcal{AP}_i}$ is a labeling function, and $\mathcal{F}_i:\Pi\to\Pi^{\lvert \widetilde{\mathcal{N}}_i \rvert}$ is a function that maps the region that agent $i$ occupies to the regions the initial neighbors $\widetilde{\mathcal{N}}_i$ of agent $i$ are at. Therefore, the treated problem is the design of bounded controllers $\tau_i$ for the establishment of the transitions $\xrightarrow{i}$.
Moreover, as discussed before, the control protocol should also guarantee the connectivity maintenance of a subset of the initial neighbors $\widetilde{\mathcal{N}	}_i,\forall i\in\mathcal{V}$. Another desired property important in applications involving robotic manipulators, is the nonsingularity of the Jacobian matrix $J_i:\mathbb{R}^n\to\mathbb{R}^{6\times n}$, that transforms the generalized coordinate rates of agent $i\in\mathcal{V}$ to generalized velocities \cite{siciliano2008springer}. That is, the set $\mathbb{S}_i = \{q_i\in\mathbb{R}^{n} \text{ s.t. } \det(J_i(q_i)[J_i(q_i)]^\top) = 0 \}$ should be avoided, $\forall i\in\mathcal{V}$.  
%In particular, we consider that the agents of a subset $\mathcal{V}_\mathcal{D}\subseteq\mathcal{V}$ are in regions of interest at some time instant $t_0\in\mathbb{R}_{\geq 0}$. We aim to design control protocols for the transitions of these agents $\mathcal{V}_\mathcal{D}$ by satisfying the collision, connectivity, and singularity specifications. The remaining agents in $\mathcal{V}\backslash\mathcal{V}_\mathcal{D}$ are considered to be already executing a transition at $t_0$. This generalizes the problem, since, a (connected) subset of agents that have reached their destination, is free to start executing its next transitions, without having to wait for the remaining agents, as will be clarified in Section \ref{subsec:hybrid}.
  Formally, we define the problem treated in this paper as follows:

\begin{problem} \label{Problem}
Consider $N$ mobile manipulators with dynamics \eqref{eq:dynamics} and $K$ regions of interest $\pi_k,k\in\mathcal{K}$, with $\dot{q}_i(t_0) < \infty, A_i(q_i(t_0))\in \pi_{k_i}, k_i\in\mathcal{K}, \forall i\in\mathcal{V}$ and $\mathcal{A}_i(q_i(t_0))\not\equiv\mathcal{A}_j(q_j(t_0)), \mathcal{E}_{i_m}(q_i(t_0))\cap\mathcal{E}_{i_\ell}(q_i(t_0)) =\emptyset, \forall i,j \in\mathcal{V},i\neq j, m,\ell\in\mathcal{Q}_i,m\neq\ell$. Given nonempty subsets of the initial edge sets $\widetilde{\mathcal{N}}_{i}\subseteq\mathcal{N}_i(q_i(0))\subseteq\mathcal{V}, \forall i\in\mathcal{V}$, the fact that $\det(J_i(q_i(t_0))[J_i(q_i(t_0))]^\top) \neq 0, \forall i\in\mathcal{V}$, as well as the indices $k'_{i}\in\mathcal{K},i\in\mathcal{V}$, such that $\lVert p_{k'_i} - p_{k'_j}\rVert + r_{k'_i} + r_{k'_j} \leq d_{\text{con}_i}, \forall j\in\widetilde{\mathcal{N}}_i,i\in\mathcal{V}$, design decentralized controllers $\tau_i$ such that, for all $i\in\mathcal{V}$: 
\begin{enumerate}
\item $(\pi_{k_i},t_0) \xrightarrow{i} (\pi_{k'_i},t_{f_i})$, for some $t_{f_i}\geq t_0$,
\item $r_0 - (\lVert p_{\scriptscriptstyle B_i}(t) \rVert  + \bar{d}_{\scriptscriptstyle B_i}) > 0,\forall t\in [t_0,t_{f_i}]$,
\item $j_i^*\in{\mathcal{N}}_i(q_i(t)), \forall j_i^*\in\widetilde{\mathcal{N}}_i, t\in [t_0,t_{f_i}]$,
\item $q_i(t)\in\mathbb{R}^n\backslash\mathbb{S}_i, \forall t\in [t_0,t_{f_i}]$.
\end{enumerate} 
\end{problem}
%Note that the control problem concerns the agents that belong to the subset $\mathcal{V}_\text{in}$ of $\mathcal{V}$, since we consider that the agent transitions require different amounts of time and hence, at the time instant $0$, there might be some agents (the ones in $\mathcal{V}\backslash\mathcal{V}_{\text{init}}$) that are in the transition phase. Therefore, these agents . Moreover, 
The aforementioned specifications concern 1) the agent transitions according to Def. \ref{def:agent transition}, 2) the confinement of the agents in $\mathcal{W}$, 3) the connectivity maintenance between a subset of initially connected agents and 4) the agent singularity avoidance. Moreover, the fact that the initial edge sets $\widetilde{\mathcal{N}}_i$ are nonempty implies that the sensing radius of each agent $i$ covers the regions $\pi_{k_j}$ of the agents in the neighboring set $\widetilde{\mathcal{N}}_i$. Similarly, the condition $\lVert p_{k'_i} - p_{k'_j}\rVert + r_{k'_i} + r_{k'_j} \leq d_{\text{con}_i}, \forall j\in\widetilde{\mathcal{N}}_i$, is a feasibility condition for the goal regions, since otherwise it would be impossible for two initially connected agents to stay connected. Intuitively, the sensing radii $d_{\text{con}_i}$ should be large enough to allow transitions of the multi-agent system to the entire workspace.

%We are interested in driving each agent $i\in\mathcal{V}$ from $\pi_{k_i}$ to $\pi_{k'_i}$ while the all specifications of Problem \ref{Problem} are met. In cases where more than one agents have the same goal region $\pi_{k'_i}$, we should be able to guarantee that all these agents will reach $\pi_{k'_i}$. Therefore, as analyzed in the next section, our goal is to drive each agent $i$ to a point right after the agent enters $\pi_{k'_i}$, i.e., $\lVert p_{\scriptscriptstyle B_i} - p_{k'_i}\rVert = (r_{k'_i} - \bar{d}_{\scriptscriptstyle B_i})$. Hence, we aim to drive agent $i$ to a configuration $q_i$ such that $\lVert p_{\scriptscriptstyle B_i} - p_{k'_i}\rVert = c_i$, with $ r_{k'_i} - \bar{d}_{\scriptscriptstyle B_i} > c_i > \bar{d}_{\scriptscriptstyle B_i}, \forall i\in\mathcal{V}$. In order to achieve that, we further need the following assumption:
%\begin{assumption} \label{ass:betas}
%Agent $i\in\mathcal{V}$ does not collide with any other agent and is at a singularity-free configuration in the set $\{q_i\in\mathbb{R}^n \text{ s.t. } \lvert \lVert p_{\scriptscriptstyle B_i} - p_{k'_i}\rVert - c_i \rvert \leq \mu_i\}, \forall i\in\mathcal{V}$, where $\mu_i$ is an arbitrarily small positive constant.
%\end{assumption} 
%Loosely speaking, Assumption \ref{ass:betas} suggests agent $i$ will not collide with any other agent arbitrarily close to its goal, which is reasonable, since, as shown in the next section, the agents have low velocities at these configurations. 

\section{MAIN RESULTS}  \label{sec:main results}

\subsection{Continuous Control Design} \label{subsec:Continuous design}

To solve Problem \ref{Problem}, we denote as $\varphi_i:\mathbb{R}^{Nn}\to\mathbb{R}_{\geq 0}$ a \emph{decentralized potential function},
with the following properties: 
\begin{enumerate}[(i)]
	\item The function $\varphi_i(q)$ is not defined, i.e., $\varphi_i(q) = \infty$, $\forall i\in\mathcal{V}$, when a collision or a connectivity break occurs,
	\item The critical points of $\varphi_i$ where the vector field $\nabla_{q_i}\varphi_i(q)$ vanishes, i.e., the points where $\nabla_{q_i}\varphi_i(q) = 0$, consist of the goal configurations and a set of configurations whose region of attraction (by following the negated vector field curves) is a set of measure zero.	
	\item It holds that $\nabla_{q_i}\varphi_i(q) + \sum_{j\in\mathcal{N}_i(q_i)}\nabla_{q_i}\varphi_j(q) = 0$ $\Leftrightarrow$ $\nabla_{q_i}\varphi_i(q) = 0$ and $\sum_{j\in\mathcal{N}_i(q_i)}\nabla_{q_i}\varphi_j(q) = 0$, $\forall i\in\mathcal{N},q\in \mathbb{R}^{Nn} $.
\end{enumerate}
More specifically, $\varphi_i(q)$ is a function of two main terms, a  \emph{goal function} $\gamma_i:\mathbb{R}^{n}\to\mathbb{R}_{\geq 0}$, that should vanish when $\mathcal{A}_i(q_i)\in\pi_{k'_i}$, and an \emph{obstacle function}, 
$\beta_i:\mathbb{R}^{n}\to\mathbb{R}_{\geq 0}$ is a bounded that encodes inter-agent collisions, collisions between the agents and the obstacle boundary/undesired regions of interest, connectivity losses between initially connected agents and singularities of the Jacobian matrix $J_i(q_i)$; 
$\beta_i$ vanishes when one or more of the above situation occurs.
%$\varphi_i:\mathbb{R}^{Nn}\to[0,1]$ for each agent $i \in \mathcal{V}$, as:
%\begin{equation} 
%\varphi_i(q) = \frac{\gamma_i(q_i)}{\left(\gamma^{k_i}_i(q_i) + \beta_i(q)\right)^{\frac{1}{k_i}}}, \label{eq:potential} % \left[\gamma_i(q_i)\right]+ \zeta_i\frac{1}{\left[\beta_i(q)\right]^{2k_i}}  
%\end{equation} 
%\begin{itemize}
%\item $\bar{q}_i = (q_{i_1}, \ldots, x_{i_{N_i}}) \in W^{N_i}$ is the vector of the states of the neighbors $i_1, \dots, i_{N_i}$ %of agent $i$.
%$k_i \in \mathbb{N}$ is a tuning parameter satisfying $k_i >1, \forall i\in\mathcal{V}$. Note that the goal here is to eventually drive $\varphi_i$ to zero, while keeping it strictly lower than one, since $\varphi_i < 1 \Leftrightarrow \beta_i > 0$, and hence, we avoid collisions, connectivity breaks, and Jacobian singularities.
%The potential function \eqref{eq:potential} is motivated by \cite{koditschek1990robot}, where it was first introduced as a provably correct-by-construction navigation function. In this work, however, \eqref{eq:potential} is not required to have the necessary properties by design, as will be shown subsequently. 
%\end{itemize}
Next, we provide an analytic construction of the goal and obstacle terms. However, the construction of the function $\varphi_i$ is out of the scope of this paper. Examples can be found in \cite{dimarogonas2007decentralized}\footnote{In that case, we could choose $\varphi_i = \tfrac{1}{1-\phi_i}$, where $\phi_i$ is the proposed function of \cite{dimarogonas2007decentralized}} and \cite{panagou2017distributed}. 

\subsubsection{$\gamma_i$ - Goal Function}

Function $\gamma_i$ encodes the control objective of agent $i$, i.e., reach the region of interest $\pi_{k'_i}$. Hence, we define  $\gamma_i:\mathbb{R}^n\to\mathbb{R}_{\geq 0}$ as
\begin{equation} \label{eq:gamma_i}
\gamma_i(q_i) = \lVert q_i - q_{k'_i} \rVert^2,
\end{equation}
where $q_{k'_i}$ is a configuration such that $r_{k'} - \|p_{\scriptscriptstyle B_i}(q_{k'_i}) - p_{k'_i} \| \leq \bar{d}_{\scriptscriptstyle B_i} - \varepsilon$, for an arbitrarily small positive constant $\varepsilon$, which implies $\mathcal{A}_i(q_{k'_i})\in \pi_{k'_i}$, $\forall i\in\mathcal{V}$. In case that multiple agents have the same target, i.e., there exists at least one $j\in\mathcal{V}\backslash\{i\}$ such that $\pi_{k'_j} = \pi_{k'_i}$, then we assume that $\mathcal{A}_i(q_{k'_i})\not\equiv \mathcal{A}_j(q_{k'_j})$.

\subsubsection{$\beta_i$ - Collision/Connectivity/Singularity Function} \label{sec:beta_gamma_definitions}

The function $\beta_i$ encodes all inter-agent collisions, collisions with the boundary of the workspace and the undesired regions of interest, connectivity between initially connected agents and singularities of the Jacobian matrix $J_i(q_i),\forall i\in\mathcal{V}$.

Consider the function $\Delta_{i_m,j_l}:\mathbb{R}^{2n}\to\mathbb{R}_{\geq 0}$, with $\Delta_{i_m,j_l}(q_i,q_j) = \delta(\Delta^{xy}_{i_m,j_l}(q_i,q_j)) + \delta(\Delta^{xz}_{i_m,j_l}(q_i,q_j)) + \delta(\Delta^{yz}_{i_m,j_l}(q_i,q_j))$, where 
$\Delta^s_{i_m,j_l}:\mathbb{R}^{2n}\to\mathbb{R}_{\geq 0}$ is the discriminant of the cubic equation $\det\{\lambda E^s_{i_m}(q_i) - E^s_{j_l}(q_j)\}=0, \forall s\in\{xy,xz,yz\}$, for two given ellipsoids $\mathcal{E}_{i_m}$ and $\mathcal{E}_{j_l}, m\in\mathcal{Q}_i,l\in\mathcal{Q}_j,i,j,\in\mathcal{V}$, and $\delta$ as defined in \eqref{eq:delta}. % is the twice continuously differentiable function
%\begin{equation} 
%\delta(x) = \begin{cases}
%x^3, & x > 0, \\
%0, & x \leq 0. \notag
%\end{cases}
%\end{equation}
As discussed in Section \ref{sec:PF}, a sufficient condition for the ellipsoids $\mathcal{E}_{i_m}$ and $\mathcal{E}_{j_l}$ not to collide, is $\Delta_{i_m,j_l}(q_i(t),q_j(t)) > 0, \forall t\in\mathbb{R}_{\geq 0}$.

Additionally, we define the greatest lower bound of the $\Delta_{i_m,j_l}$ when the point $p_{j_l}$ is on the boundary of the sensing radius $\partial D_i(q_i)$ of agent $i$, as $\widetilde{{\Delta}}_{i_m,j_l} = \inf_{(q_i,q_j)\in\mathbb{R}^{2n}}\{\Delta_{i_m,j_l}(q_i,q_j)\} \text{ s.t. } \lVert p_{\scriptscriptstyle B_i}(q_i) - p_{j_l}(q_j) \rVert = d_{\text{con}_i}, \forall m\in\mathcal{Q}_i,l\in\mathcal{Q}_j,i,j\in\mathcal{V}$. Since $d_{\text{con}_i} > 2\max_{i\in\mathcal{V}}\{\bar{d}_{\scriptscriptstyle B_i}\} + \varepsilon_d$, it follows that there exists a positive constant $\varepsilon_\Delta$ such that $\widetilde{\Delta}_{i_m,j_l} \geq \varepsilon_\Delta > 0,\forall m\in\mathcal{Q}_i,l\in\mathcal{Q}_j,i,j\in\mathcal{V}, i\neq j$.

Moreover, we define the function $\Delta_{i_m,\pi_k}:\mathbb{R}^n\to\mathbb{R}_{\geq 0}$, with 
$\Delta_{i_m,\pi_k}(q_i) = \delta(\Delta^{xy}_{i_m,\pi_k}(q_i))+\delta(\Delta^{xz}_{i_m,\pi_k}(q_i))+\delta(\Delta^{yz}_{i_m,\pi_k}(q_i))$, where $\Delta^s_{i_m,\pi_k}:\mathbb{R}^n\to\mathbb{R}$ is the discriminant of the cubic equation $\det(\lambda E^s_{i_m}(q_i) - T^s_{\pi_k})$, with $T^s_{\pi_k}$ the projected version of $T_{\pi_k}$ in \eqref{eq:region matrix}, $s\in\{xy,xz,yz\}$, and $\delta$ as given in \eqref{eq:delta}.
A sufficient condition for $\mathcal{E}_{i_m}$ and region $\pi_k, k\in\mathcal{K}$ not to collide is $\Delta_{i_m,\pi_k}(q_i(t))>0, \forall t\in\mathbb{R}_{\geq 0}, m\in\mathcal{Q}_i,i\in\mathcal{V}$.

We further define the function $\eta_{ij,c}:\mathbb{R}^{n}\times\mathbb{R}^{n}\to\mathbb{R}$, with $\eta_{ij,c}(q_i,q_j) = d^2_{\text{con}_i}-\lVert p_{\scriptscriptstyle B_i}(q_i) - p_{\scriptscriptstyle B_j}(q_j)\rVert^2$, and the distance functions $\beta_{i_m,j_l}:\mathbb{R}_{\geq 0}\to\mathbb{R}_{\geq 0}, \beta_{ij,c}:\mathbb{R}\to\mathbb{R}_{\geq 0}, \beta_{iw}:\mathbb{R}_{\geq 0}\to\mathbb{R}$ as 
%\begin{subequations} 
\begin{align}
\beta_{i_m,j_l}(\Delta_{i_m,j_l}) &=
\begin{cases}
%\Delta_{i_m,j_l}, &  0 \leq \Delta_{i_m,j_l} < c_{i,a},\\
\phi_{i,a}(\Delta_{i_m,j_l}), &  0 \leq \Delta_{i_m,j_l} < \bar{\Delta}_{i_m,j_l}, \\
\bar{\Delta}_{i_m,j_l}, &  \bar{\Delta}_{i_m,j_l} \leq \Delta_{i_m,j_l}, \\
\end{cases} \notag \\ % \label{eq:beta_i_m_j_l}  \\
%\beta_{i_m,i_l}(\Delta_{i_m,i_l}) &=
%\begin{cases}
%%\Delta_{i_m,j_l}, &  0 \leq \Delta_{i_m,j_l} < c_{i,a},\\
%\phi_{i}(\Delta_{i_m,i_l}), &  0 \leq \Delta_{i_m,i_l} < \bar{\Delta}_{i_m,i_l}, \\
%1, &  \bar{\Delta}_{i_m,i_l} \leq \Delta_{i_m,i_l}, \\
%\end{cases} \label{eq:beta_i_m_i_l}  \\
%\beta_{i_m,\pi_k}(\Delta_{i_m,\pi_k}) &=
%\begin{cases}
%\Delta_{i_m,j_l}, &  0 \leq \Delta_{i_m,j_l} < c_{i,a},\\
%\phi_{i,r}(\Delta_{i_m,\pi_k}), &  0 \leq \Delta_{i_m,\pi_k} < \bar{\Delta}_{i_m,\pi_k}, \\
%1, &  \bar{\Delta}_{i_m,\pi_k} \leq \Delta_{i_m,\pi_k}, \\
%\end{cases} \label{eq:beta_i_m_pi_k}  \\
\beta_{ij,c}(\eta_{ij,c}) &= 
\begin{cases}
0, &  \eta_{ij,c} < 0,\\
\phi_{i,c}(\eta_{ij,c}), &   0 \le \eta_{ij,c} < d^2_{\text{con}_i}, \\
d^2_{\text{con}_i}, &  d^2_{\text{con}_i} \le \eta_{ij,c},  
\end{cases} \notag \\%\label{eq:beta_c}\\
\beta_{iw}(\lVert p_{\scriptscriptstyle B_i} \rVert^2) &=  (r_w - \bar{d}_{\scriptscriptstyle B_i})^2 - \lVert p_{\scriptscriptstyle B_i} \rVert^2, \notag %\label{eq:beta_iw} 
%\beta_{J_i}(\det(J_i)) &= 
%\begin{cases}
%\phi_{i,J}((\det(J_i))^2), & 0 \leq (\det(J_i))^2 \leq 1 \\
%1, & 1 \leq (\det(J_i))^2 
%\end{cases} \label{eq:J_i} \\ 
%\beta_{s_i}(\beta_{iv,c}) &= e^{-\mu_i\beta_{iv,c}}, \label{eq:beta_s_i} 
\end{align}
%\end{subequations}
where $\bar{\Delta}_{i_m,j_l}$ is a constant satisfying $0 < \bar{\Delta}_{i_m,j_l} \leq \widetilde{\Delta}_{i_m,j_l}, \forall m\in\mathcal{Q}_i,l\in\mathcal{Q}_j,i,j\in\mathcal{V},i\neq j$, and $\phi_{i,a}, \phi_{i,c}$ are \textit{strictly increasing} polynomials appropriately selected to guarantee that the functions $\beta_{i_m,j_l}$, and $\beta_{ij,c}$, respectively, are twice continuously differentiable everywhere, with $\phi_{i,a}(0) = \phi_{i,c}(0) = 0, \forall i\in\mathcal{V}$. %An example of such a polynomial is illustrated in Fig. \ref{fig:b_ijc}. Moreover, the terms $\beta_{s_i}$ contribute to the stability proof by appropriately designing the positive constant $\mu_i$, as will be shown later. In addition, $v\in\widetilde{\mathcal{N}}_i(0)$, i.e., it corresponds to some arbitrary initial neighbor of agent $i$. 
Note that the functions defined above use only local information in the sensing range $d_{\text{con}_i}$ of agent $i$. The function $\beta_{i_m,j_l}$ becomes zero when ellipsoid $\mathcal{E}_{i_m}$ collides with ellipsoid $\mathcal{E}_{j_l}$, whereas $\beta_{ij,c}$ becomes zero when agent $i$ loses connectivity with agent $j$. Similarly, $\beta_{iw}$ encodes the collision of agent $i$ with the workspace boundary. %and the singularities of the Jacobian $J_i$, respectively. 

Finally, we choose the function $\beta_i:\mathbb{R}^{Nn}\to\mathbb{R}_{\geq 0}$ as 
%\begin{align}
%&\beta_i (q) = \beta_{iw}(\lVert p_{\scriptscriptstyle B_i} \rVert^2)\prod\limits_{j\in\widetilde{\mathcal{N}}_i(0)}\beta_{ij,c}(\eta_{ij,c}(q_i,q_j))\notag \\ &\cdot\prod\limits_{(m,j,l)\in\widetilde{T}}\beta_{i_m,j_l}(\Delta_{i_m,j_l}(q_i,q_j))\prod\limits_{\substack{ m,l\in\mathcal{Q}_i\\m\neq l}}\beta_{i_m,i_l}(\Delta_{i_m,i_l}(q_i))\notag\\
%&\hspace{-1mm}\cdot \beta_{s_i}(\beta_{iv,c}(\eta_{iv}(q_i,q_v))) \beta_{J_i}(\det(J(q_i)))\cdot\prod\limits_{ (m,k)\in\widetilde{L}} \beta_{i_m,\pi_k}(\Delta_{i_m,\pi_k}(q_i)) \label{eq:betas}
%\end{align}
\begin{align}
\beta_i(q) =& (\det(J_i(q_i)[J_i(q_i)]^\top))^2\beta_{iw}(\lVert p_{\scriptscriptstyle B_i} \rVert^2)\prod\limits_{j\in\widetilde{\mathcal{N}}_i}\beta_{ij,c}(\eta_{ij,c}) \notag\\ &\prod\limits_{(m,j,l)\in\widetilde{T}}\beta_{i_m,j_l}(\Delta_{i_m,j_l})\prod\limits_{ (m,k)\in\widetilde{L}} \Delta_{i_m,\pi_k}(q_i), \label{eq:betas}
\end{align}
$\forall i\in\mathcal{V}$, where $\widetilde{T} = \mathcal{Q}_i\times\mathcal{V}\times\mathcal{Q}_j, \widetilde{L} = \mathcal{Q}_i\times(\mathcal{K}\backslash\{k_i,k'_i\})$, and we have omitted the dependence on $q$ for brevity.  Note that we have included the term $(\det(J_iJ^\top_i))^2$ to also account for singularities of $J_i, \forall i\in\mathcal{V}$ and the term $\prod_{(m,j,l)\in\widetilde{T}}\beta_{i_m,j_l}(\Delta_{i_m,j_l})$ takes into account also the collisions between the ellipsoidal rigid bodies of agent $i$.

With the introduced notation, the properties of the functions $\varphi_i$ are: 
\begin{enumerate}[(i)]
	\item $\beta_i(q)\to 0 \Leftrightarrow (\varphi_i(q) \to \infty), \forall i\in\mathcal{V}$,
	\item $ \nabla_{q_i}\varphi_i(q)|_{q_i=q^\star_i} = 0,  \forall q^\star_i\in\mathbb{R}^n \text{ s.t. } \gamma_i(q^\star_i) = 0$ and the regions of attraction of the points $\{q \in\mathbb{R}^{Nn}: \nabla_{q_i}\varphi_i(q)|_{q_i=\widetilde{q}_i} = 0, \gamma_i(\widetilde{q}_i) \neq 0 \}, i\in\mathcal{V}$, are sets of measure zero.
\end{enumerate}

By further denoting $\mathbb{D}_i = \{q\in\mathbb{R}^{Nn}: \beta_i(q) > 0 \}$, we are ready to state the main theorem, that summarizes the main results of this work.
%%%%%%%%%%%%%%%%%%%%%%%%%%%%%%%%%%%%%%%%%%%%%%%%%%%%%%%%%%%%%%%%%%%%%%%%%%%%%%%%

\begin{theorem}
%There exist positive and finite $\underline{k}_i$, such that, if $k_i > \underline{k}_i, \forall i\in\mathcal{V}$, and 
Under the Assumptions \ref{ass:f}-\ref{ass:validity}, the decentralized control laws $\tau_i: \mathbb{D}_i\times\mathbb{R}^{n}\to \mathbb{R}^n$, with
%\begin{align}
%& \tau_i(q,\dot{q}) = g_i(q_i) - \nabla_{q_i}\varphi_i(q) - \hat{c}_i(q_i,\dot{q}_i)\lVert q_i \rVert \dot{q}_i\notag \\
%&\hspace{-2mm} - \frac{\rho_i \dot{q}_i}{\tanh (\lVert\dot{q}_i\rVert^2)} \left\lvert\sum_{j \in \mathcal{N}_i(q_i)} \left[ \nabla_{q_j}\varphi_i(q) \right]^\top \dot{q}_j \right\rvert  -\lambda_{i} \dot{q}_i, \label{eq:control law}
%\end{align}
\begin{align}
& \tau_i(q,\dot{q}_i) = g_i(q_i) - \nabla_{q_i}\varphi_i(q) - \sum\limits_{j\in\mathcal{N}_i(q_i)}\nabla_{q_i}\varphi_j(q) \notag \\
&\hspace{15mm} - \hat{c}_i(q_i,\dot{q}_i)\lVert q_i \rVert \dot{q}_i - \lambda_{i} \dot{q}_i, \label{eq:control law}
\end{align}
$\forall i\in\mathcal{V}$, along with the adaptation laws $\dot{\hat{c}}_i:\mathbb{R}^n\times\mathbb{R}^n\to\mathbb{R}$:
\begin{equation}
\dot{\hat{c}}_i(q_i,\dot{q}_i) = \sigma_i\lVert \dot{q}_i\rVert^2\lVert q_i\rVert, \label{eq:adaptation law} 
\end{equation}
with $\hat{c}_i(q_i(t_0),\dot{q}_i(t_0)) < \infty,\sigma_i\in\mathbb{R}_{\geq 0}$ , $\forall i\in\mathcal{V}$, guarantee the transitions $(\pi_{k_i},t_0)\xrightarrow{i}(\pi_{k'_i},t_{f_i})$ for finite $t_{f_i},i\in\mathcal{V}$ for almost all initial conditions, while ensuring $\beta_i > 0,\forall i\in\mathcal{V}$, as well as the boundedness of all closed loop signals, providing, therefore, a solution to Problem \ref{Problem}.
\end{theorem}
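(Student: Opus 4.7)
The plan is to build a global Lyapunov function and exploit the standard skew-symmetry identity for Lagrangian systems together with the adaptation law to cancel the unknown friction/uncertainty bound. Specifically, I would work with
\begin{equation*}
V(q,\dot{q},\hat{c}) = \sum_{i\in\mathcal{V}}\varphi_i(q) + \tfrac{1}{2}\sum_{i\in\mathcal{V}}\dot{q}_i^\top M_i(q_i)\dot{q}_i + \sum_{i\in\mathcal{V}}\tfrac{1}{2\sigma_i}\bigl(\hat{c}_i - c_i\bigr)^2,
\end{equation*}
which is well defined on $\bigcap_{i\in\mathcal{V}}\mathbb{D}_i$ since each $\varphi_i$ is finite there. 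The first task is to verify that under the hypotheses of Problem~\ref{Problem} we have $V(t_0)<\infty$; boundedness of $V$ along trajectories will then be the vehicle for showing $\beta_i(q(t))>0$ for all $t\ge t_0$, which yields collision avoidance, connectivity maintenance, and Jacobian nonsingularity as immediate consequences of property~(i) of $\varphi_i$.

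Next I would differentiate $V$ along \eqref{eq:dynamics}. For the kinetic term, substituting $M_i\ddot{q}_i = \tau_i - N_i\dot{q}_i - g_i - f_i$ and using skew-symmetry of $\dot{M}_i - 2N_i$ gives $\dot{q}_i^\top M_i\ddot{q}_i + \tfrac12\dot{q}_i^\top\dot{M}_i\dot{q}_i = \dot{q}_i^\top(\tau_i - g_i - f_i)$. For the potential term I would reorder the double sum and use that $j\in\mathcal{N}_i(q_i)\Leftrightarrow i\in\mathcal{N}_j(q_j)$ (the sensing range is symmetric) to obtain
\begin{equation*}
\sum_{i\in\mathcal{V}}\dot{\varphi}_i = \sum_{i\in\mathcal{V}} \dot{q}_i^\top\!\Bigl(\nabla_{q_i}\varphi_i + \sum_{j\in\mathcal{N}_i(q_i)}\!\nabla_{q_i}\varphi_j\Bigr),
\end{equation*}
so that plugging in \eqref{eq:control law} cancels all gradient terms. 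The remaining expression becomes $-\sum_i \lambda_i\|\dot{q}_i\|^2 - \sum_i \dot{q}_i^\top f_i - \sum_i \hat{c}_i\|q_i\|\|\dot{q}_i\|^2 + \sum_i(\hat{c}_i-c_i)\|q_i\|\|\dot{q}_i\|^2$ after inserting \eqref{eq:adaptation law}, and Assumption~\ref{ass:f} bounds $|\dot{q}_i^\top f_i|\le c_i\|q_i\|\|\dot{q}_i\|^2$, leaving $\dot{V}\le -\sum_i\lambda_i\|\dot{q}_i\|^2\le 0$.

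From this I can chain together the consequences: $V$ is nonincreasing and bounded, hence $\varphi_i$ stays bounded, so $\beta_i>0$ for all $t\ge t_0$ (yielding items 2--4 of Problem~\ref{Problem}); $\dot{q}_i$ is bounded through the kinetic term, $\hat{c}_i$ is bounded through the adaptation term, and since $\det(J_iJ_i^\top)$ stays away from zero the control \eqref{eq:control law} is itself bounded, giving boundedness of all closed-loop signals. Invoking LaSalle's invariance principle on the positively invariant sublevel set of $V$, trajectories converge to the largest invariant set inside $\{\dot{q}=0\}$; in that set $\ddot{q}_i=0$ too, and the dynamics together with property~(iii) of $\varphi_i$ force $\nabla_{q_i}\varphi_i(q)=0$ for every $i$.

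The final and most delicate step is to upgrade this to the transition claim $(\pi_{k_i},t_0)\xrightarrow{i}(\pi_{k'_i},t_{f_i})$. By property~(ii), the critical points of $\varphi_i$ split into the desired goal configurations (where $\gamma_i=0$, so $\mathcal{A}_i(q_i)\in\pi_{k'_i}$) and a set of isolated saddles whose stable manifold has Lebesgue measure zero; invoking the fact that the closed-loop flow is locally Lipschitz on $\bigcap_i\mathbb{D}_i$, these stable manifolds remain measure-zero in the state space, so from almost every admissible initial condition the trajectory converges to a configuration satisfying $\gamma_i(q_i)=0$ for all $i$, i.e., every agent enters its target region in finite time $t_{f_i}$, establishing Def.~\ref{def:agent transition}. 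The main obstacle here is the standard one for potential-based multi-agent navigation: the existence, genericity, and measure-zero character of the non-goal critical points depend on the specific construction of $\varphi_i$ (e.g., the Koditschek--Rimon style construction referenced via \cite{dimarogonas2007decentralized,panagou2017distributed}); we invoke these results as assumed properties (i)--(iii) rather than reprove them, since the construction of $\varphi_i$ is explicitly stated to be outside the scope of this paper.
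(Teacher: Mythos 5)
Your proposal is correct and follows essentially the same route as the paper: the same Lyapunov-like function $V=\sum_i\varphi_i+\tfrac12\dot q_i^\top M_i\dot q_i+\tfrac{1}{2\sigma_i}(\hat c_i-c_i)^2$, the same use of skew-symmetry and the neighborhood-symmetry reordering of the gradient sums, the same adaptive cancellation of the friction bound yielding $\dot V\le-\sum_i\lambda_i\|\dot q_i\|^2$, and the same LaSalle plus measure-zero-critical-points argument for almost-global convergence to $\gamma_i=0$. No substantive differences to report.
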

\begin{proof}
%We first define the sets that represent the undesired and desired agent configurations, respectively: 
%\begin{align}
%%Q_{i,g} = & \{q\in Q_w \text{ s.t. } \gamma_i(q_i) = 0 \}, \notag \\
%%\widetilde{Q}_{i,g} = & \{q_i\in\mathbb{R}^{n} \text{ s.t. } \gamma_i(q_i) \geq \mu_i > 0\} , \notag \\
%\hat{Q}_i =& \{q\in \mathbb{R}^{Nn} \text{ s.t. } \gamma_i(q_i) > (r_{k'_i}-3\bar{d}_{\scriptscriptstyle B_i})^2 \geq 1\}, \notag\\
%\widetilde{Q}_i =& \mathbb{R}^{Nn}\backslash\hat{Q}_i =  \{q\in \mathbb{R}^{Nn} \text{ s.t. } \gamma_i(q_i) \leq (r_{k'_i}-3\bar{d}_{\scriptscriptstyle B_i})^2 \}. \notag
%\end{align}
%Assumption \ref{ass:betas} now is formally stated as:
%\begin{equation}
%\beta_i(q_i,t) \geq \widetilde{\beta}_i > 0, \forall (q_i,t)\in\widetilde{Q}_{i,g}, \label{eq:beta assumption}
%\end{equation}
%with $\widetilde{\beta}_i$ being an arbitrarily small positive constant, $\forall i\in\mathcal{V}$.
The closed loop system of \eqref{eq:dynamics} is written as:
\small
\begin{align}
M_i(q_i)\ddot{q}_i + N_i(q_i,\dot{q}_i)\dot{q}_i + f_i(q_i,\dot{q}_i) = -\nabla_{q_i}\varphi_i(q_i) - \lambda_i\dot{q}_i \notag \\
- \hat{c}(q_i,\dot{q}_i)\lVert q_i \rVert\dot{q}_i  - \sum\limits_{j\in\mathcal{N}_i(q_i)}\nabla_{q_i}\varphi_j(q), \label{eq:closed loop dynamics}
\end{align}
\normalsize
$\forall i\in\mathcal{V}$.
Due to Assumption \ref{ass:validity}, the domain where the functions $\varphi_i(q)$ are well-defined (i.e., where $\beta_i > 0$) is connected. Hence,
consider the Lyapunov-like function $V:\mathbb{R}^{N}\times\mathbb{R}^{Nn}\times\mathbb{R}^N\times\mathbb{D}_1\times\dots\times\mathbb{D}_N\to\mathbb{R}_{\geq 0}$, with
\begin{align}
%\hspace{-6mm}
V(\varphi, \dot{q}, \widetilde{c},q) =& \sum\limits_{i\in\mathcal{V}} \varphi_i(q) + \frac{1}{2}[\dot{q}^\top_iM_i(q_i)\dot{q}_i + \frac{1}{\sigma_i}\widetilde{c}_i^2] \notag
\end{align}
where $\varphi$ and $\widetilde{c}$ are the stack vectors containing all $\varphi_i$ and $\widetilde{c}_i$, respectively, $i\in\mathcal{V}$, and $\widetilde{c}_i:\mathbb{R}^{n}\times\mathbb{R}^{n}\to\mathbb{R}$, with $\widetilde{c}_i(q_i,\dot{q}_i) = \hat{c}_i(q_i,\dot{q}_i) - c_i, \forall i\in\mathcal{V}$. Note that, since there are no collision or singularities at $t_0$, the functions $\beta_i(q), i\in\mathcal{V}$, are strictly positive at $t_0$ which implies the boundedness of $V$ at $t_0$. Therefore, since $\dot{q}_i(t_0)<\infty$ and $\hat{c}_i(t_0)<\infty, \forall i\in\mathcal{V}$, there exists a positive and finite constant $M<\infty$ such that $V_0\coloneqq V(\varphi(q(t_0)), \dot{q}(t_0), \tilde{c}(q(t_0),q(t_0)) \leq M$.

By differentiating $V$, substituting the dynamics \eqref{eq:dynamics}, employing the skew symmetry of $\dot{M}_i - 2N_i$ as well as the property $\sum_{i\in\mathcal{V}} ( [\nabla_{q_i}\varphi_i(q)]^\top\dot{q}_i + \sum_{j\in\mathcal{N}_i(q_i)} [\nabla_{q_j}\varphi_i(q)]^\top\dot{q}_j ) = \sum_{i\in\mathcal{V}}( [\nabla_{q_i}\varphi_i(q) ]^\top +  \sum_{j\in\mathcal{N}_i(q_i)} [\nabla_{q_i}\varphi_j(q)]^\top) \dot{q}_i$, we obtain
%\begin{align}
%\dot{V} =& \sum\limits_{i\in\mathcal{V}}\left\{(\nabla_{q_i}\varphi_i)^\top\dot{q}_i + \sum\limits_{j\in\mathcal{V}_i(t)}\left\{ (\nabla_{q_j}\varphi_i)^\top\dot{q}_j\right\} %+\right. \notag\\
%& \left. \dot{q}^\top_i(\frac{1}{2}\dot{B}_i - N_i)\dot{q}_i + \dot{q}^\top_i(\tau_i - g_i - f_i) + \frac{1}{\sigma_i}\widetilde{c}_i\dot{\hat{c}}_i \right\}, \notag
%\end{align}
%which, by employing the skew symmetry of $\dot{B}_i - 2N_i$ and grouping terms, becomes
\begin{align}
\dot{V} =& \sum\limits_{i\in\mathcal{V}}\Bigg\{\dot{q}^\top_i \Bigg( \nabla_{q_i}\varphi_i(q) + \sum\limits_{j\in\mathcal{N}_i(q_i)} \nabla_{q_i}\varphi_j(q) + \tau_i - g_i(q_i)\Bigg)  \notag \\ 
&-\dot{q}_i^\top f_i(q_i,\dot{q}_i) + \frac{1}{\sigma_i}\widetilde{c}_i\dot{\hat{c}}_i  \Bigg\},
\end{align}
which, by substituting the control and adaptation laws \eqref{eq:control law}, \eqref{eq:adaptation law} becomes:
\begin{align}
\dot{V} =& \sum\limits_{i\in\mathcal{V}}\{ -\lambda_i\lVert \dot{q}_i \rVert^2 - \hat{c}_i\lVert \dot{q}_i \rVert^2 \lVert q_i \rVert  - \dot{q}_i^\top f_i(q_i,\dot{q}_i) \notag \\
& + \widetilde{c}_i\lVert \dot{q}_i \rVert^2 \lVert q_i \rVert  \notag\\ 
\leq &  \sum\limits_{i\in\mathcal{V}}\{ -\lambda_i\lVert \dot{q}_i \rVert^2 - (\hat{c}_i-c_i-\widetilde{c}_i)\lVert \dot{q}_i \rVert^2 \lVert q_i \rVert   
\end{align}
where we have used the property $\lVert f_i(q_i,\dot{q}_i) \rVert \leq c_i\lVert q_i \rVert \lVert \dot{q}_i\rVert$. Since $\widetilde{c}_i = \hat{c}_i - c_i$, we obtain $\dot{V} \leq - \sum_{i\in\mathcal{V}}\lambda_i\lVert \dot{q}_i\rVert^2$,
%\begin{equation}
%\dot{V} \leq - \sum\limits_{i\in\mathcal{V}}k_i\lVert \dot{q}_i\rVert^2,
%\end{equation}
which implies that $V$ is non-increasing along the trajectories of the closed loop system. Hence, we conclude that $V(t)\leq V_0 \leq M$, as well as the boundedness of $\widetilde{c}_i, \varphi_i, \dot{q}_i$ and hence of $\hat{c}_i, \forall i\in\mathcal{V}, t\geq t_0$. Therefore, we conclude that $\beta_i(q(t)) > 0, \forall t\geq t_0, i\in\mathcal{V}$. 
%\begin{align}
%%& \frac{1}{\beta^{2k_i}_i(q)} \leq M  \Leftrightarrow  \beta_i \geq \left(\frac{1}{M}\right)^{\frac{1}{2k_i}} \geq \underline{\beta} \coloneqq \frac{1}{M} > 0, 
%&\ln(\alpha(\varphi_i(q))) = \ln\left(\frac{1}{1-\varphi_i(q)}\right) \leq M \Leftrightarrow  \notag \\
%&\varphi_i(q) \leq \bar{\varphi} \coloneqq 1 - \frac{1}{e^M} < 1 \label{eq:underline phi},
%%&\varphi_i(q_i,t) \leq \widetilde{\varphi_i}^{\frac{1}{k_i}} \notag \\
%%&\beta_i(q_i,t) \geq \frac{\inf\limits_{q_i\in\mathbb{R}^n}\{\gamma^{k_i}_i(q_i)\}}{M_i} .
%\end{align}
%Hence 
%\begin{align}
%&\varphi_i(q) \leq \bar{\varphi} < 1 \Leftrightarrow \notag \\
%&\beta_i(q) \geq \gamma^{k_i}\left(\frac{1}{\bar{\varphi}^{k_i} %z} - 1\right) > \gamma^{k_i}\left(\frac{1}{\bar{\varphi}} - 1\right) > 0, \label{eq:beta bound}
%\end{align}
%$\forall q\in\mathbb{R}^{Nn}$, since $0<\bar{\varphi} < 1$, and $k_i$ will be chosen greater than $1,\forall i\in\mathcal{V}$. 
Hence, inter-agent collisions, collision with the undesired regions and the obstacle boundary, connectivity losses between the subsets of the initially connected agents and singularity configurations are avoided.
Moreover, by invoking LaSalle's Invariance Principle, the system converges to the largest invariant set contained in
\begin{equation}
S = \{(q,\dot{q})\in \mathbb{D}_1\times\dots\times\mathbb{D}_N\times\mathbb{R}^{Nn}\text{ s.t. } \dot{q} = 0_{Nn\times 1}\}. \label{eq:la salle0}
\end{equation}
For $S$ to be invariant, we require that $\ddot{q}_i = 0_{n\times 1}, \forall i\in\mathcal{V}$, and thus we conclude for the closed loop system \eqref{eq:closed loop dynamics} that $\nabla_{q_i}\varphi_i(q) = 0_{n\times 1}, \forall i\in\mathcal{V}$,
%\begin{equation}
%\nabla_{q_i}\varphi_i(q) = 0
%\end{equation}
since $\| f_i(q_i,0_{n\times1}) \| \leq 0, \forall q_i\in\mathbb{R}^{n}$, in view of Assumption \ref{ass:f}. Therefore, by invoking the properties of $\varphi_i(q)$, each agent $i\in\mathcal{V}$ will converge to a critical point of $\varphi_i$, i.e., all the configurations where $\nabla_{q_i}\varphi_i(q) = 0_{n\times1}, \forall i\in\mathcal{V}$. However, due to properties of $\varphi_i(q)$, the initial conditions that lead to configurations $\widetilde{q}_i$ such that  $\nabla_{q_i}\varphi_i(q)|_{q_i=\widetilde{q}_i} = 0_{n\times1}$ and $\gamma_i(\widetilde{q}_i) \neq 0$ are sets of measure zero in the configuration space \cite{Koditchek92}. Hence, the agents will converge to the configurations where $\gamma_i(q_i) = 0$ from almost all initial conditions, i.e., $\lim\limits_{t\to\infty}\gamma_i(q_i(t)) = 0$. Therefore, since $r_{k'} - \|p_{\scriptscriptstyle B_i}(q_{k'_i}) - p_{k'_i} \| \leq \bar{d}_{\scriptscriptstyle B_i} - \varepsilon$, it can be concluded that there exists a finite time instance $t_{f_i}$ such that $\mathcal{A}_i(q_i(t_{f_i}))\in\pi_{k'}$, $\forall i\in\mathcal{V}$ and hence, each agent $i$ will be at its goal region $p_{k'_i}$ at time $t_{f_i}, \forall i\in\mathcal{V}$. In addition, the boundedness of $q_i,\dot{q}_i$ implies the boundedness of the adaptation laws $\dot{\hat{c}}_i, \forall i\in\mathcal{V}$. Hence, the control laws \eqref{eq:control law} are also bounded.
\begin{figure*}
%\centering
\vspace{0.4cm}
\begin{subfigure}[t]{0.32\textwidth}	
%	\centering
	\includegraphics[scale = 0.31]{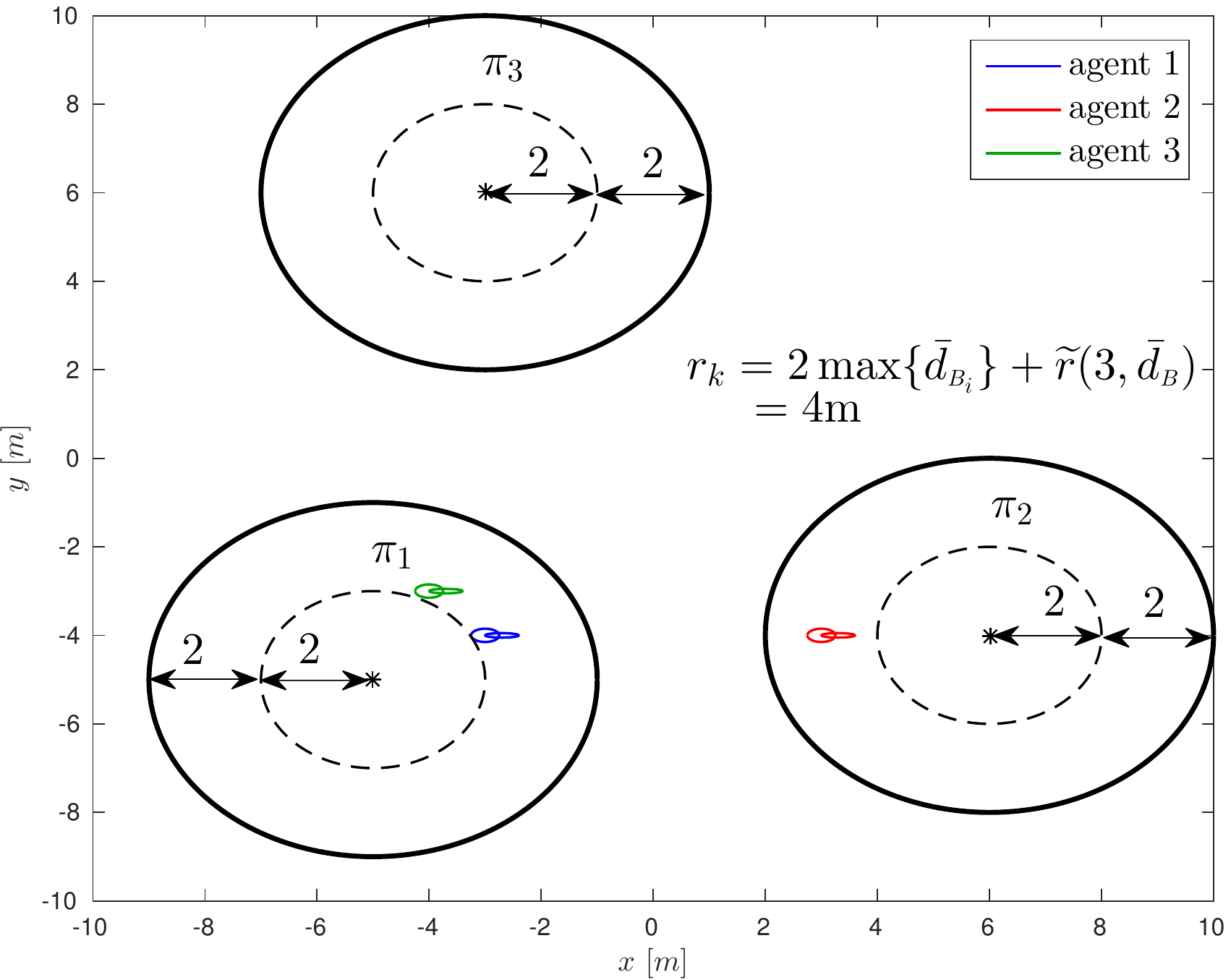}	
	\caption{\label{fig:navigation initial}}
\end{subfigure} % 
\begin{subfigure}[t]{0.32\textwidth}
	\includegraphics[trim = -0.65cm 0 0 0,scale = 0.31]{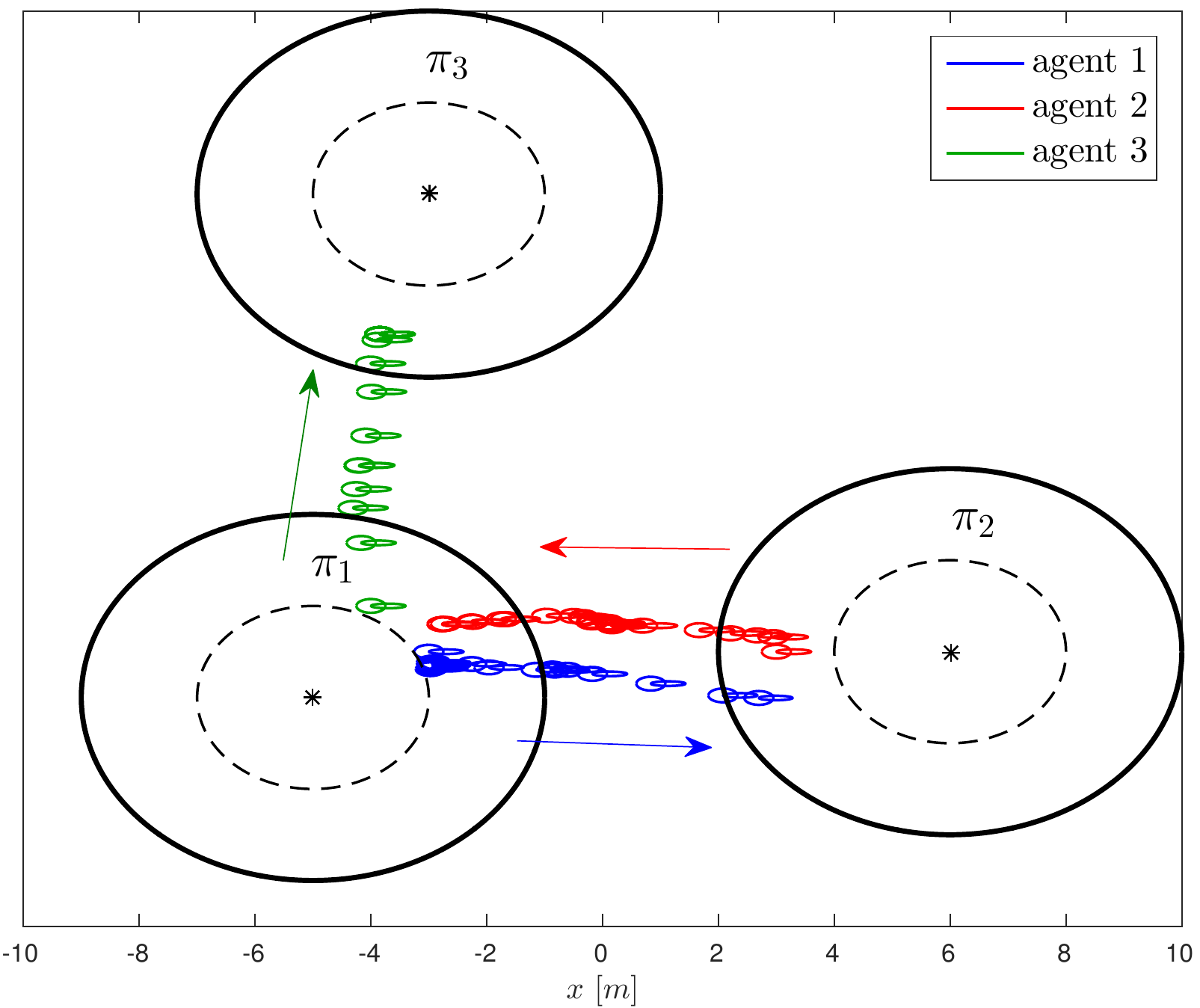}	
	\caption{\label{fig:navigation final}}
\end{subfigure} % 
\begin{subfigure}[t]{0.32\textwidth}	
	\includegraphics[scale = 0.31]{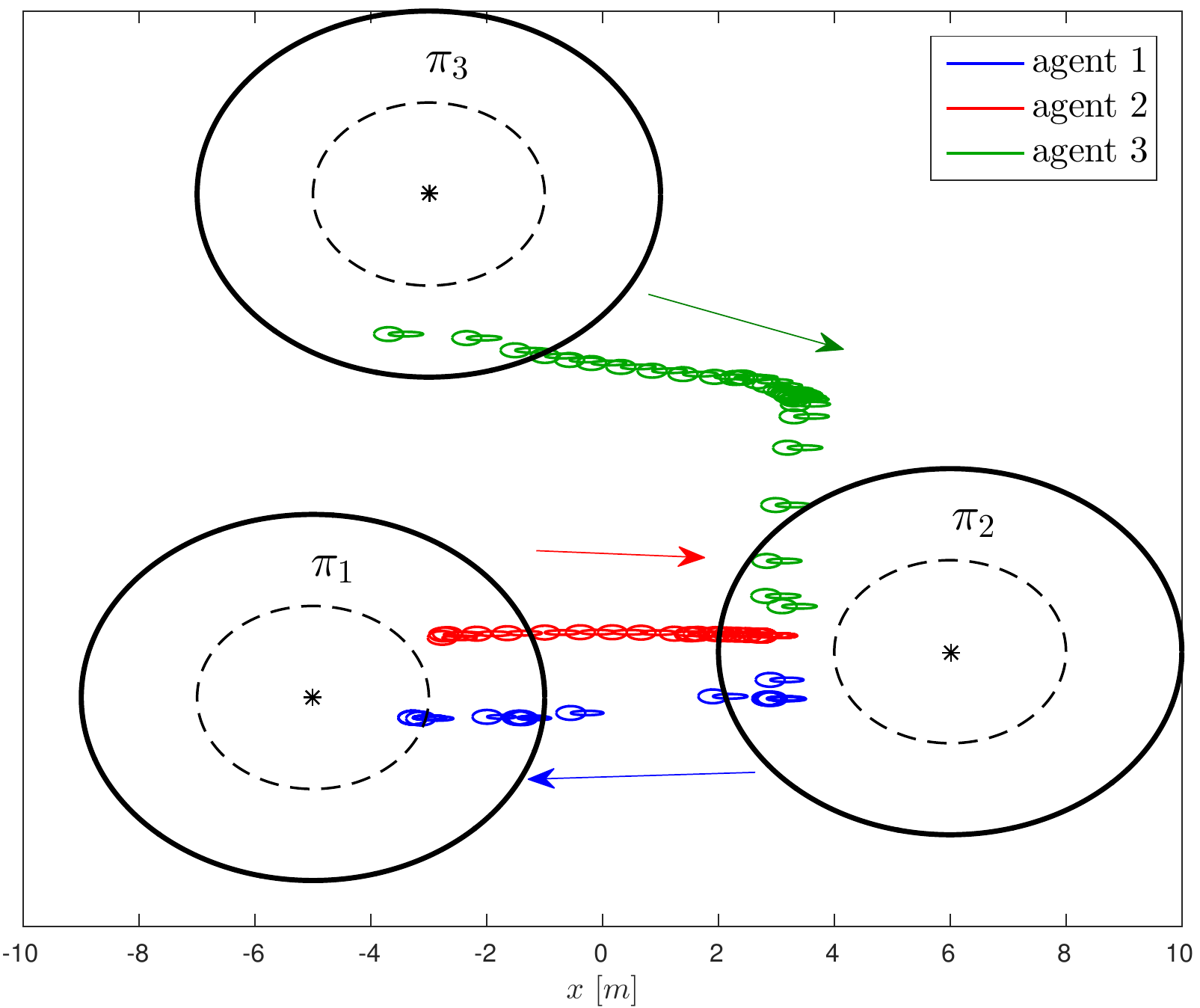}	
	\caption{\label{fig:navigation final2}}
\end{subfigure} % 
(a): The initial position of the agents in the workspace of the simulation example. (b): The first transition of the agents in the workspace. Agent $1$ transits from $\pi_1$ to $\pi_2$, agent $2$ from $\pi_2$ to $\pi_1$, and agent $3$ from $\pi_{1}$ to $\pi_3$. (c): The second transition of the agents in the workspace. Agent $1$ transits from $\pi_2$ to $\pi_1$, agent $2$ from $\pi_1$ to $\pi_2$, and agent $3$ from $\pi_{3}$ to $\pi_2$.
\end{figure*}
\end{proof}

%\begin{remark}
%Note that once an agent
%Consider the special case that a subset of agents $\mathcal{V}_s\subseteq \mathcal{V}$ have the same goal region and one of them (let agent $z\in\mathcal{V}_s$) ends up in the configuration where $\gamma_z(q_z) = 0$ (i.e., the center of the region). Then, it follows from \eqref{eq:beta_i lower bound} that $\beta_z(q) \geq 0$ at this configuration, and since $\dot{q}_z = 0$, the specific agent will no longer try to avoid collisions. In this case, however, the terms of $\beta_z$ that might vanish concern exclusively 
%collisions with the remaining agents $i\in\mathcal{V}_s\backslash\{z\}$ that have the same goal region. These agents then, which are not in the special configuration $\gamma_i(q_i) = 0, \forall i\in\mathcal{V}_s\backslash\{z\}$ (since that would imply collision with agent $z$), will guarantee that the ``dangerous" terms of $\beta_z$ will remain strictly positive due to \eqref{eq:beta_i lower bound}, since these terms belong also to the obstacle functions $\beta_i, i\in\mathcal{V}_s\backslash\{z\}$.
%\end{remark}

\begin{remark}
Note that the design of the obstacle functions \eqref{eq:betas} renders the control laws \eqref{eq:control law} decentralized, in the sense that each agent uses only local information with respect to its neighboring agents, according to its limited sensing radius. Each agent can obtain the necessary information to cancel the term $\sum_{j\in\mathcal{N}_i(q_i)} \nabla_{q_i}\varphi_j(q)$ from its neighboring agents.  
%the potential functions $\varphi_i$, which constitute the basic tool for the proposed methodology, are not  correct-by-construction navigation functions, despite the potential similarities of the gain tuning for the manipulation of the critical points. Therefore, mainly due to the $2$nd order systems that are considered in this work, the tedious procedure of proving that a designed potential field is a navigation function, which was introduced in \cite{koditschek1990robot} and is followed in many works, is avoided, simplifying thus the corresponding technical details. 
Finally, note that the considered dynamic model \eqref{eq:dynamics} applies for more general manipulation robots (e.g. underwater or aerial manipulators), without limiting the proposed methodology to mobile ones.

\end{remark}

\subsection{Hybrid Control Framework} \label{subsec:hybrid}
Due to the proposed continuous control protocol of Section \ref{subsec:Continuous design}, the transitions $(\pi_{k_i},t_0)\xrightarrow{i}(\pi_{k'_i},t_{f_i})$ of Problem \ref{Problem} are well-defined, according to Def. \ref{def:agent transition}. Moreover, since all the agents $i\in\mathcal{V}$ remain connected with the subset of their initial neighbors $\widetilde{\mathcal{V}}_i$ and there exist finite constants $t_{f_i}$, such that $\mathcal{A}_i(q_i(t_{f_i}))\in\pi_{k'_i},\forall i\in\mathcal{V}$, all the agents are aware of their neighbors state, when a transition is performed. Hence, the transition system \eqref{eq:TS} is well defined, $\forall i\in\mathcal{V}$.    
Consider, therefore, that $\mathcal{A}_i(q_i(0))\in\pi_{k_{i,0}}, k_{i,0}\in\mathcal{K}, \forall i\in\mathcal{V}$, as well as a given desired path for each agent, that does not violate the connectivity condition of Problem \ref{Problem}. Then, the iterative application of the control protocol \eqref{eq:control law} for each transition of the desired path of agent $i$  guarantees the successful execution of the desired paths, with all the closed loop signals being bounded. 

\begin{remark}
Note that, according to the aforementioned analysis, we implicitly assume that the agents start executing their respective transitions at the same time (we do not take into account individual control jumps in the Lyapunov analysis, i.e., it is valid only for one transition). Intuition suggests that if the regions of interest are sufficiently far from each other, then the agents will be able to perform the sequence of their transitions independently.     
Detailed technical analysis of such cases is part of our future goals.
% include the case where each agent continues its path regardless of the actions of the rest agents.    
%Note that, the proposed methodology is also valid in the case where only a subset of all the agents are in regions of interest at $t_0$. 
%Hence, if the members of an initially connected subset of neighbors reach their destinations, they can start directly with their next transitions, without having to wait for the remaining agents, which will be still performing their corresponding transitions. 
\end{remark}

\section{SIMULATION RESULTS}\label{sec:simulations}
To demonstrate the validity of the proposed methodology, we consider the simplified example of three agents in a workspace with $r_0 = 12$ and three regions of interest, with $r_k = 4, \forall k\in\{1,2,3\}$ m. Each agent consists of a mobile base and a rigid link connected with a rotational joint, with $\bar{d}_{\scriptscriptstyle B_i} = 1$m, $\forall i\in\{1,2,3\}$. We also choose $p_1 = [-5,-5]$m, $p_2 = [6,-4]$m, $p_3 = [-3,6]$m. 
The initial base positions are taken as $p_{\scriptscriptstyle B_1} = [-3,-4]^\top\text{m}, p_{\scriptscriptstyle B_2} = [3,-4]^\top\text{m}, p_{\scriptscriptstyle B_3} = [-4,-5]^\top\text{m}$ with $\bar{d}_{\scriptscriptstyle B_i} = 1.25\text{m}, \forall i\in\{1,2,3\}$, which imply that $\mathcal{A}_1(q_1(0)),\mathcal{A}_3(q_3(0))\in\pi_1$ and $\mathcal{A}_2(q_2(0))\in\pi_2$ (see Fig. \ref{fig:navigation initial}). The control inputs for the agents are the $2$D force acting on the mobile base, and the joint torque of the link. We also consider a sensing radius of $d_{\text{con}_i} = 8\text{m}$ and the subsets of initial neighbors as $\widetilde{\mathcal{N}}_1 = \{2\}, \widetilde{\mathcal{N}}_2 = \{1,3\}$, and $\widetilde{\mathcal{N}}_3 = \{2\}$, i.e., agent $1$ has to stay connected with agent $2$, agent $2$ has to stay connected with agents $1$ and $3$ and agent $3$ has to stay connected with agent $2$. The agents are required to perform two transitions. Regarding the first transition, we choose $\pi_{k'_1} = \pi_2$ for agent $1, \pi_{k'_2} = \pi_1$ for agent $2$, and $\pi_{k'_3} = \pi_3$, for agent $3$. Regarding the second transition, we choose $\pi_{k'_1} = \pi_1, \pi_{k'_2} = \pi_2$, and $\pi_{k'_3} = \pi_2$. The control parameters and gains where chosen as $k_i = 5, \lambda_i =10, \rho_i=1$, and $\sigma_i = 0.01, \forall i\in\{1,2,3\}$. We employed the potential field from \cite{dimarogonas2007decentralized}.
The simulation results are depicted in Fig. \ref{fig:navigation final}-\ref{fig:c tilde}. In particular, Fig. \ref{fig:navigation final} and \ref{fig:navigation final2} illustrate the two consecutive transitions of the agents. Fig. \ref{fig:gamma beta} depicts the obstacle functions $\beta_i$ which are strictly positive, $\forall i\in\{1,2,3\}$. Finally, the control inputs are given in Fig. \ref{fig:inputs} and the parameter errors $\widetilde{c}$ are shown in Fig. \ref{fig:c tilde}, which indicates their boundedness. As proven in the theoretical analysis, the transitions are successfully performed while satisfying all the desired specifications.

\begin{figure}[]
\vspace{0.4cm}
\centering
\includegraphics[scale=0.35]{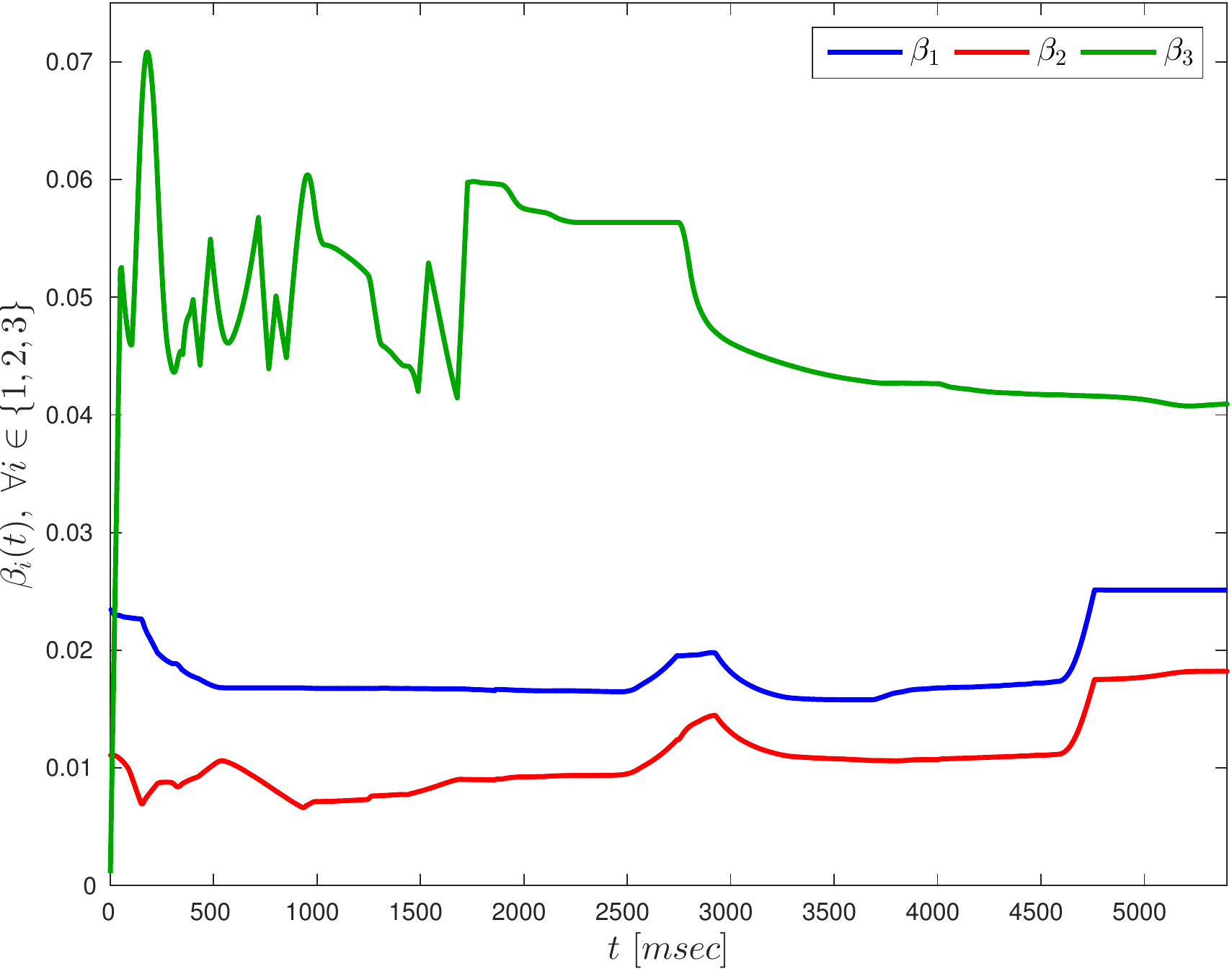}
\caption{The obstacle functions $\beta_i, i\in\{1,2,3\}$, which remain strictly positive.  \label{fig:gamma beta}}
\end{figure}

\begin{figure}[]
\centering
\includegraphics[scale=0.35]{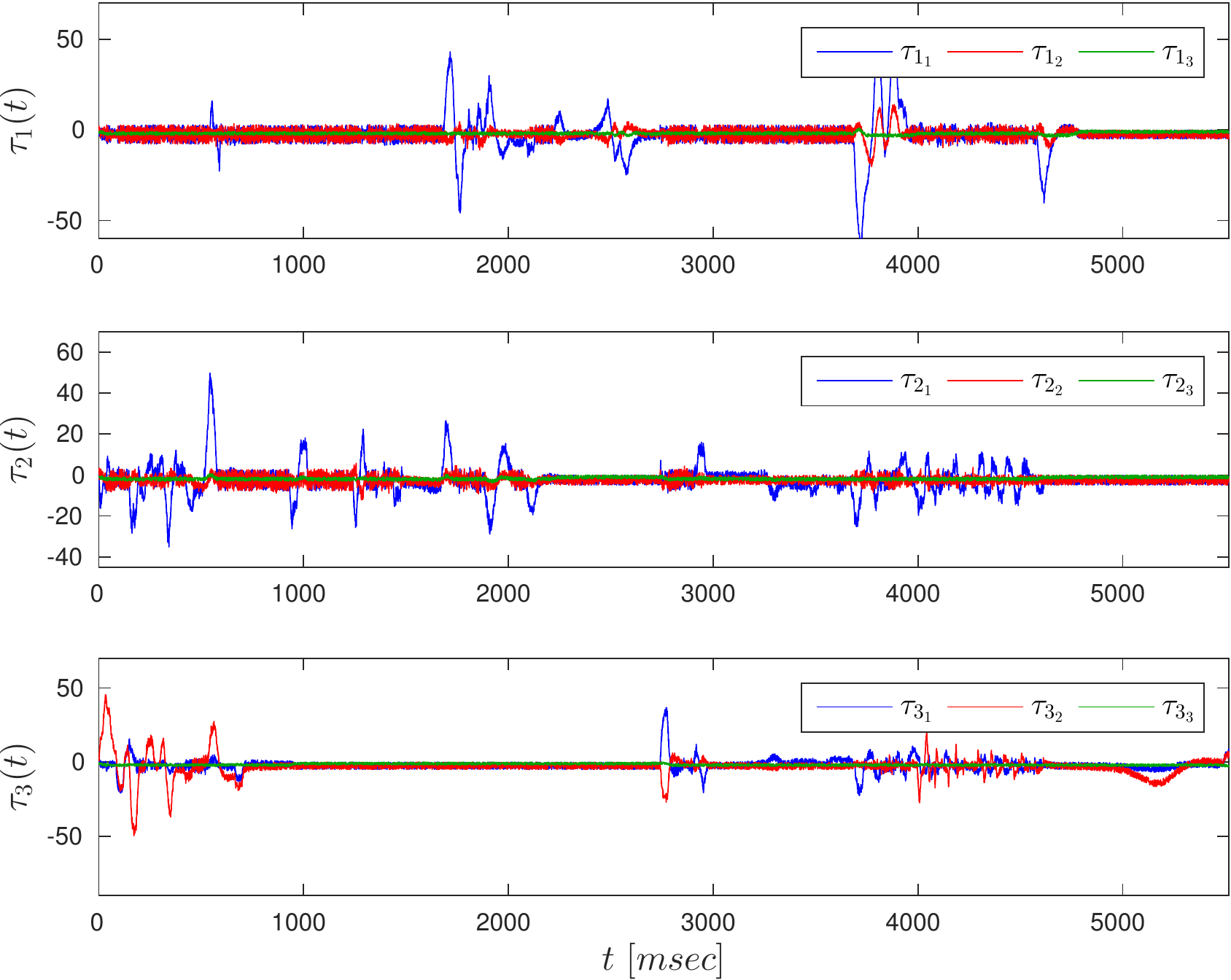}
\caption{The resulting control inputs $\tau_i,\forall i\in\{1,2,3\}$ for the two transitions.\label{fig:inputs}}
\end{figure}

%\begin{figure}[ht]
%\centering
%\includegraphics[scale=0.35]{navigation_final.eps}
%\caption{The first transition of the agents in the workspace. Agent $1$ transits from $\pi_1$ to $\pi_2$, agent $2$ from $\pi_2$ to $\pi_1$, and agent $3$ from $\pi_{1}$ to $\pi_3$.\label{fig:navigation final}}
%\end{figure}
%
%\begin{figure}[ht]
%\centering
%\includegraphics[scale=0.35]{navigation_final2.eps}
%\caption{The second transition of the agents in the workspace. Agent $1$ transits from $\pi_2$ to $\pi_1$, agent $2$ from $\pi_1$ to $\pi_2$, and agent $3$ from $\pi_{3}$ to $\pi_2$.\label{fig:navigation final2}}
%\end{figure}

\section{CONCLUSIONS AND FUTURE WORKS} \label{sec:concl}
In this paper we designed decentralized abstractions for multiple mobile manipulators by guaranteeing the navigation of the agents among predefined regions of interest, while guaranteeing inter-agent collision avoidance and connectivity maintenance for the initially connected agents. We proposed a novel approach for ellipsoid collision avoidance as well as appropriately chosen potential functions that are free of undesired local minima. Future efforts will be devoted towards addressing abstractions of cooperative tasks between the agents by employing hybrid control techniques as well as abstraction reconfiguration due to potential execution incapability of the transitions.

\begin{figure}[]
\vspace{0.4cm}
\centering
\includegraphics[scale=0.35]{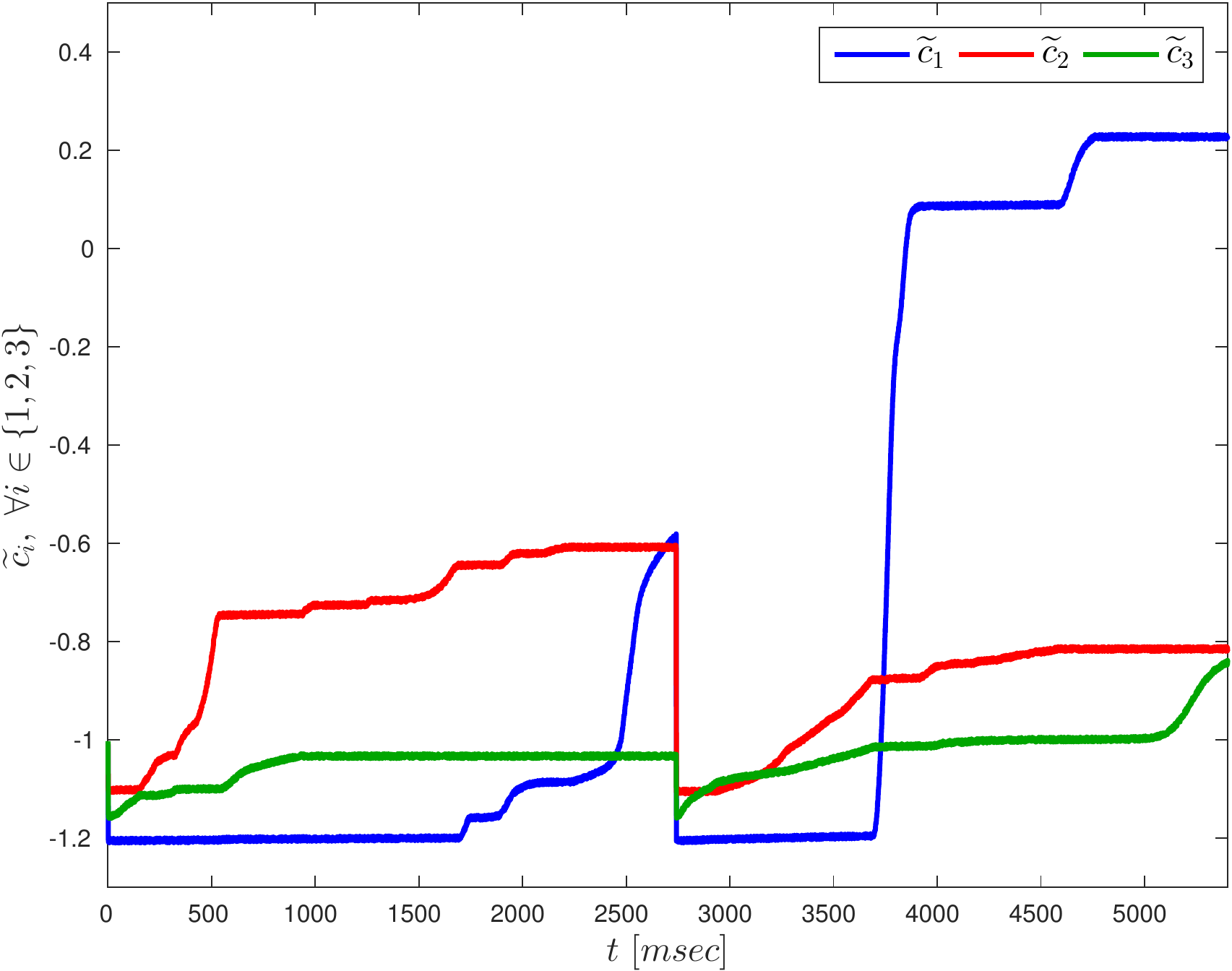}
\caption{The parameter deviations $\widetilde{c}_i,\forall i\in\{1,2,3\}$, which are shown to be bounded.\label{fig:c tilde}}
\end{figure}

%%%%%%%%%%%%%%%%%%%%%%%%%%%%%%%%%%%%%%%%%%%%%%%%%%%%%%%%%%%%%%%%%%%%%%%%%%%%%%%%
%\section{ACKNOWLEDGMENTS}

%%%%%%%%%%%%%%%%%%%%%%%%%%%%%%%%%%%%%%%%%%%%%%%%%%%%%%%%%%%%%%%%%%%%%%%%%%%%%%%%

\bibliography{references}

\begin{thebibliography}{10}

\bibitem{baier2008principles}
C.~Baier, J.-P. Katoen, and K.~G. Larsen, {\em Principles of model checking}.
\newblock MIT press, 2008.

\bibitem{Meng15}
M.~Guo and D.~V. Dimarogonas, ``Multi-agent plan reconfiguration under local
  ltl specifications,'' {\em IJRR}, vol.~34, no.~2, pp.~218--235, 2015.

\bibitem{Belta2007}
C.~Belta, A.~Bicchi, M.~Egerstedt, E.~Frazzoli, E.~Klavins, and G.~J. Pappas,
  ``Symbolic planning and control of robot motion,'' {\em Robotics \&
  Automation Magazine}, vol.~14, no.~1, pp.~61--70, 2007.

\bibitem{Bhatia2010}
A.~Bhatia, L.~E. Kavraki, and M.~Y. Vardi, ``Sampling-based motion planning
  with temporal goals,'' {\em ICRA}, pp.~2689--2696, 2010.

\bibitem{Bhatia2011}
A.~Bhatia, M.~R. Maly, L.~E. Kavraki, and M.~Y. Vardi, ``Motion planning with
  complex goals,'' {\em Robotics \& Automation Magazine}, 2011.

\bibitem{Cowlagi2016}
R.~V. Cowlagi and Z.~Zhang, ``Motion-planning with linear temporal logic
  specifications for a nonholonomic vehicle kinematic model,'' {\em American
  Control Conference (ACC)}, pp.~6411--6416, 2016.

\bibitem{Diaz2015}
Y.~Diaz-Mercado, A.~Jones, C.~Belta, and M.~Egerstedt,
  ``Correct-by-construction control synthesis for multi-robot mixing,'' {\em
  CDC}, 2015.

\bibitem{Fainekos2009}
G.~E. Fainekos, A.~Girard, H.~Kress-Gazit, and G.~J. Pappas, ``Temporal logic
  motion planning for dynamic robots,'' {\em Automatica}, 2009.

\bibitem{Filippidis2012}
I.~Filippidis, D.~V. Dimarogonas, and K.~J. Kyriakopoulos, ``Decentralized
  multi-agent control from local ltl specifications,'' {\em IEEE Conference on
  Decision and Control (CDC)}, pp.~6235--6240, 2012.

\bibitem{Belta2005}
C.~Belta, V.~Isler, and G.~J. Pappas, ``Discrete abstractions for robot motion
  planning and control in polygonal environments,'' {\em TRO}, 2005.

\bibitem{belta2006controlling}
C.~Belta and L.~C. Habets, ``Controlling a class of nonlinear systems on
  rectangles,'' {\em Transactions on Automatic Control}, 2006.

\bibitem{reissig2011computing}
G.~Reissig, ``Computing abstractions of nonlinear systems,'' {\em Transactions
  on Automatic Control}, 2011.

\bibitem{tiwari2008abstractions}
A.~Tiwari, ``Abstractions for hybrid systems,'' {\em Formal Methods in System
  Design}, vol.~32, no.~1, pp.~57--83, 2008.

\bibitem{rungger2015state}
M.~Rungger, A.~Weber, and G.~Reissig, ``State space grids for low complexity
  abstractions,'' {\em CDC}, 2015.

\bibitem{boskos2015decentralized}
D.~Boskos and D.~V. Dimarogonas, ``Decentralized abstractions for feedback
  interconnected multi-agent systems,'' {\em CDC}, 2015.

\bibitem{belta2004abstraction}
C.~Belta and V.~Kumar, ``Abstraction and control for groups of robots,'' {\em
  IEEE Transactions on robotics}, vol.~20, no.~5, pp.~865--875, 2004.

\bibitem{tanner2003nonholonomic}
H.~G. Tanner, S.~G. Loizou, and K.~J. Kyriakopoulos, ``Nonholonomic navigation
  and control of cooperating mobile manipulators,'' {\em Transactions on
  robotics and automation}, 2003.

\bibitem{Loizou-RSS-14}
S.~Loizou, ``The multi-agent navigation transformation: Tuning-free multi-robot
  navigation,'' {\em Proceedings of Robotics: Science and Systems}, 2014.

\bibitem{do2012coordination}
K.~D. Do, ``Coordination control of multiple ellipsoidal agents with collision
  avoidance and limited sensing ranges,'' {\em Systems \& control letters},
  vol.~61, no.~1, pp.~247--257, 2012.

\bibitem{verginis_ifac17}
C.~K. Verginis and D.~V. Dimarogonas, ``Multi-agent motion planning and object
  transportation under high level goals,'' {\em IFAC}, 2017.

\bibitem{choi2009continuous}
Y.-K. Choi, J.-W. Chang, W.~Wang, M.-S. Kim, and G.~Elber, ``Continuous
  collision detection for ellipsoids,'' {\em Transactions on visualization and
  Computer Graphics}, 2009.

\bibitem{siciliano2008springer}
B.~Siciliano and O.~Khatib, {\em Springer handbook of robotics}.
\newblock 2008.

\bibitem{dimarogonas2007decentralized}
D.~V. Dimarogonas and K.~J. Kyriakopoulos, ``Decentralized navigation functions
  for multiple robotic agents with limited sensing capabilities,'' {\em Journal
  of Intelligent \& Robotic Systems}, vol.~48, no.~3, pp.~411--433, 2007.

\bibitem{panagou2017distributed}
D.~Panagou, ``A distributed feedback motion planning protocol for multiple
  unicycle agents of different classes,'' {\em IEEE Transactions on Automatic
  Control}, vol.~62, no.~3, pp.~1178--1193, 2017.

\bibitem{Koditchek92}
E.~Rimon and D.~E. Koditschek, ``Exact robot navigation using artificial
  potential functions,'' {\em IEEE Transactions on robotics and automation},
  vol.~8, pp.~501--518, 1992.

\end{thebibliography}
\bibliographystyle{ieeetr}

\end{document}